\renewcommand\footnotetextcopyrightpermission[1]{} % removes footnote with conference information in first column
\definecolor{orange}{rgb}{1, .36, .08}
\newcommand{\fakeparagraph}[1]{\vspace{1mm}\noindent\textbf{#1.}}
\begin{document}

\title{
Toward Efficient Evaluation of Logic Encryption Schemes: Models and Metrics
} 

% \author{}
% \vspace{-1mm}
\author{Yinghua Hu$^{1}$ \quad Vivek V. Menon$^{2}$ \quad Andrew Schmidt$^{2}$ \quad 
 Joshua Monson$^{2}$ \\ Matthew French$^{2}$ \quad Pierluigi Nuzzo$^{1}$ \\ [-1ex]
\small$^{1}$ Department of Electrical and Computer Engineering, University of Southern California, Los Angeles, CA, USA, \{yinghuah, nuzzo\}@usc.edu\\[0ex]
\small$^{2}$ Information Sciences Institute, University of Southern California, Arlington, VA, USA, \{vivekv, aschmidt, jmonson, mfrench\}@isi.edu}
% \vspace{2mm}
% To make more spaces just add \\!

\begin{abstract}
Research in logic encryption over the last decade has resulted in various techniques to prevent different security threats such as Trojan insertion, intellectual property leakage, and reverse engineering. 
However, there is little agreement on a uniform set of metrics and models to efficiently assess the achieved security level and the trade-offs between security and overhead. 
This paper addresses the above challenges by relying on a general logic encryption model that can encompass all the existing techniques, and a uniform set of metrics
that can capture multiple, possibly conflicting, security concerns. We apply our modeling approach to four state-of-the-art encryption techniques, showing that it enables fast and accurate evaluation of design trade-offs, average prediction errors that are at least $2\times$ smaller than previous approaches, and 
the evaluation of compound encryption methods.\footnote{This report is an extended version of~\cite{hu2019models}.} 
% \pierluigi{Revise our citation that is now fully available. Double check other citations too.}
\end{abstract}
%\keywords{ACM proceedings, \LaTeX, text tagging}

%% DON'T SHOW PAGE NUMBERS
\pagenumbering{gobble}
\maketitle

\section{\label{sec:intro}Introduction}

Integrated circuits (ICs) often represent the ultimate root of trust of modern computing systems. However, the decentralization of the IC design and manufacturing process over the years, involving multiple players in the supply chain, has increasingly raised the risk of hardware security threats from untrusted third parties.

\emph{Logic encryption} aims to counteract some of these threats by appropriately modifying the logic of a circuit, that is, by adding extra components and a set of key inputs such that the functionality of the circuit cannot be revealed until the correct value of the key is applied. 
Several logic encryption methods have been proposed over the the last decade to protect the designs from threats such as intellectual property (IP) piracy, reverse engineering, and hardware Trojan insertion (see, e.g., \cite{Roy2010Ending-piracy-o,Rajendran2015Fault-Analysis-, Yasin2016SARLock:-SAT-at, shamsi2019approximation,tehranipoor2010survey}). 
However, existing techniques are often tailored to specific attack models and security concerns, and rely on different metrics to evaluate their effectiveness. 
It is then difficult to quantify the security of different methods, rigorously evaluate the inherent trade-offs between different security concerns, and systematically contrast their strength with traditional area, delay, and power metrics. 

This paper introduces a formal modeling framework for the evaluation of logic encryption schemes and the exploration of the associated design space. We rely on a general functional model for logic encryption
that can encompass all the existing methods. 
Based on this general model, we make the following contributions:
\begin{itemize}
\item We define a set of metrics that can formally capture multiple, possibly conflicting, security concerns that are key to the design of logic encryption schemes, such as functional corruptibility and resilience to different attacks, thus providing a common ground to compare different methods. 
  \item We develop compact models to efficiently quantify the quality and resilience of four methods, including state-of-the-art logic encryption techniques, and enable trade-off evaluation between different security concerns. 
\end{itemize}
Simulation results on a set of ISCAS benchmark circuits show the effectiveness of our modeling framework for fast and accurate evaluation of the design trade-offs. 
Our models produce conservative estimates of resilience with average prediction errors that are at least twice as small as previous approaches and, in some cases, improve by two orders of magnitude.
Finally, our approach can provide quantitative support to inform system-level decisions across multiple logic encryption strategies as well as the implementation of compound strategies, which can be necessary for providing high levels of protection against different threats with limited overhead. 

The rest of the paper is organized as follows. Section~\ref{sec:background} introduces  background concepts on logic encryption and recent efforts toward the systematic analysis of their security properties. Section~\ref{sec:theory} presents the general functional model for combinational logic encryption 
% which is, to the best of our knowledge, applicable to all combinational logic encryption techniques. Based on this general model, we introduce 
and defines four security-driven evaluation metrics. 
% which are critical for the evaluation of a logic encryption technique. 
Section~\ref{sec:obf_lib} applies the proposed model and metrics to the analysis of the security properties of four encryption techniques. 
% with necessary proofs. 
Our analysis is validated in Section~\ref{sec:experiment} and compared with state-of-the-art characterizations of the existing techniques. 
% Furthermore, we show that the probabilistic model of SAT attack resilience for \texttt{SFLL}~\cite{Yasin2017Provably-Secure}, a novel logic encryption technique, can be far from the real resilience, and, thus, a better characterization of \texttt{SFLL} is needed. 
Finally, Section~\ref{sec:conclusion} concludes the paper.
% \pierluigi{I would insert here a summary of the paper, since we now have space and the paper becomes longer.}

\section{Background and Related Work}\label{sec:background}

Logic encryption techniques 
% such as \texttt{FLL}~\cite{Rajendran2015Fault-Analysis-} and \texttt{SARLock}~\cite{Yasin2016SARLock:-SAT-at}
have originally focused mostly on a subset of security concerns, and lacked 
% (e.g., functional corruptibility, exact, or approximate SAT-attack~\cite{Subramanyan2015Evaluating-the-} resilience) 
methods to systematically quantify the level of protection against different (and potentially unknown) hardware attacks. A class of methods, such as \emph{fault analysis-based logic locking} (\texttt{FLL})~\cite{Rajendran2015Fault-Analysis-}, mostly focuses on providing high output \emph{error rates} 
% quantified in terms of Hamming distance (HD),
when applying a wrong key, for example, by appropriately inserting key-controlled XOR and XNOR gates in the circuit netlist. 
% It is then difficult to relate the amount of achievable protection to system-level design objectives. 
Another class of techniques, based on one-point functions, such as \texttt{SARLock}~\cite{Yasin2016SARLock:-SAT-at}, % and \texttt{Anti-SAT}~\cite{,Xie2018Anti-SAT:-Mitig}
aims, instead, to provide resilience to \emph{SAT-based attacks}, a category of attacks using satisfiability (SAT) solving to efficiently prune the space of possible keys~\cite{Subramanyan2015Evaluating-the-}. These methods require an exponential number of SAT-attack iterations in the size of the key to unlock the circuit, but tend to
% corrupt an output only for a single input pattern and 
expose a close approximation of the correct circuit function. Efforts toward a comprehensive encryption framework 
% which is related to the work of this paper, 
have only started to appear. 

\emph{Stripped functionality logic locking} (\texttt{SFLL})~\cite{Yasin2017Provably-Secure} has been recently proposed as a scheme for provably secure encryption with respect to a broad set of quantifiable security concerns, including error rate, resilience to SAT attacks, and resilience to removal attacks, aiming to remove the encryption logic from the circuit. However, 
% \texttt{SFLL} does not share the same evaluation metrics with other techniques mentioned above, thus it is hard to see how well \texttt{SFLL} performs compared with other techniques. 
while 
% \texttt{SFLL} can make 
the average number of SAT-attack iterations is shown to grow exponentially with the key size, the worst-case SAT-attack duration, as discussed in Sec.~\ref{sec:obf_lib}, can become unacceptably low, which calls for mechanisms to explore the combination of concepts from \texttt{SFLL} with other schemes. 

Zhou~\cite{zhou2017humble} provides a theoretical analysis of the contention between error rate and SAT-attack resilience in logic encryption, drawing from concepts in learning theory~\cite{valiant1984theory}. 
% e.g., the notion of \emph{probably approximately correctness} (PAC)
Along the same direction, Shamsi~\textit{et al.}~\cite{shamsi2019approximation} develop \emph{diversified tree logic} (\texttt{DTL}) as a scheme capable of increasing the error rate of SAT-resilient protection schemes in a tunable manner. 
A recent effort~\cite{shamsi2019locking} adopts a game-theoretic approach to formalize notions of secrecy and resilience that account for the impact of learnability of the encrypted function and information leakage from the circuit structure. 
% A recent effort adopts a game-theoretic approach to formalize different security metrics for logic encryption~\cite{shamsi2019locking} and investigates the impact of the learnability of the encrypted circuit function on its resilience to approximate SAT attacks. 
% A lookup-table (LUT) based universal circuit architecture is then proposed as a way to reduce the information leakage from the circuit structure and achieve ``best-possible'' approximation resilience with polynomial overhead.

While our approach builds on previous analyses~\cite{zhou2017humble, shamsi2019approximation}, it is complementary, as it focuses on 
% to build a design framework that allows evaluating the effectiveness of different logic encryption schemes in practice and explores the trade-off between different security concerns. 
% Differently from prior research, the focus of this paper is 
models and metrics that enable fast and accurate evaluation across multiple encryption techniques and security concerns,  
% such that the designer can efficiently compare the performance of different techniques in the same fashion. Such universal metrics could potentially 
eventually raising the level of abstraction at which security-related design decisions can be made. 
% 
% Finally, 
We distinguish between logic encryption, which augments the circuit function via additional components and key bits, and obfuscation~\cite{barak2001possibility,goldwasser2007best}, which is concerned with hiding the function of a circuit or program (without altering it) to make it unintelligible from its structure. In this paper, we focus on the functional aspects of logic encryption, and leave the modeling of its interactions with obfuscation 
% as a direction 
for future work. 

\section{Logic Encryption: Models and Metrics}
\label{sec:theory}

% In this section, we introduce a universal functional architecture for logic encryption schemes, which could potentially facilitate the designer see the the true mechanism of any logic encryption techniques. Benefiting from the it, we formally define some universal security concerns and their metrics. 
% The design framework implementing the proposed methodology is pictorially represented in Figure~\ref{fig:ODSEE}. 
% Given a netlist to be encrypted, a library of logic encryption primitives, and a set of user specification with respect to various security concerns, attacker assumptions, and overhead concerns, the design framework formulates optimization problems to select Pareto optimal encryption strategies satisfying the specification. 
% Models for the security, overhead, and implementation architecture of the encryption primitives are stored in the encryption library. 
% They are transformed into either constraints or objective functions in the optimization.  
% \begin{figure}[t]
% 	\centering
% % 	\vspace{-10pt}
% 	\includegraphics[width=0.75\columnwidth]{}
% 	\caption{Logic encryption design optimization flow.
% 	}
% 	\label{fig:ODSEE}
% 	\vspace{-15pt}
% \end{figure}

% \subsection{Circuit Representation}

% For an arbitrary set $S$, 
We denote by $|S|$ the cardinality of a set $S$. We represent a combinational logic circuit with primary input (PI) ports $I$ and primary output (PO) ports $O$ by its Boolean function $f: \mathbb{B}^n \rightarrow \mathbb{B}^m$, where $n=|I|$ and $m=|O|$, and its netlist, modeled as a labelled directed graph $G$.
% $G = (V,E)$. 
% $V$ is the set of vertices, representing circuit ports, logic gates, or sub-modules in the netlist. 
% $E$ is the set of edges, representing the connections between vertices. 
% The direction of an edge indicates the order of signal evaluation; for example, $e_{ij} = (v_i, v_j)$ implies that the output of component $v_j$ is evaluated after the output of component $v_i$. 
Both $f$ and $G$ may be parameterized by a set of  configuration parameters $P$, with values in $\mathcal{P}$, related to both the circuit function and implementation. % We can then represent the circuit as the tuple $\mathcal{C} = (I, O, f, G, P)$. 
Given a function $f$, logic encryption creates a new function $f': \mathbb{B}^n \times \mathbb{B}^l \rightarrow \mathbb{B}^m$, 
where $l=|K|$ and $K$ is the set of key input ports added to the netlist. There exists $k^* \in \mathbb{B}^l$ such that $\forall i \in \mathbb{B}^n, f(i)\equiv f'(i, k^*)$. We call $k^*$ the correct key. We wish to express $f'$ as a function of $f$ and the encryption logic. 

\subsection{A General Functional Model}

We build on the recent literature~\cite{zhou2017humble,yasin2017ttlock,Yasin2017Provably-Secure} to define a general model, capable of representing the behavior of all the existing logic encryption schemes, as shown in Fig.~\ref{fig:general_locking}. 
The function $g(i, k)$ maps an input and key value to a \emph{flip} signal, which is combined with the output of $f(i)$ via a XOR gate to produce the encrypted PO.  
% \pierluigi{flip should be in the figure} 
The value of the PO is inverted when the \textit{flip} signal is one. 
% \pierluigi{Not immediate to me how we model stripping with this diagram but will wait for Sec. 4.} 
We assume that $g$ is parameterized by a set $Q$ of configuration parameters, with values in $\mathcal{Q}$ related to a specific encryption technique. 
% such as the key size in \texttt{FLL} and the hamming distance in \texttt{SFLL}. 
% The determination of all the parameters in $\mathcal{P}$ indicates an implementation plan of the technique. 
% The programmability of $g(i, k)$ allows that any logic encryption technique can be functionally mapped to this architecture. 
% $g(i, k)$ is what we call an \emph{encryption primitive} in this paper.
%
% We formally define an \emph{encryption primitive} as a tuple $\mathcal{E} = (\mathcal{P},\mathcal{S},\mathcal{O})$, where $\mathcal{P}$ is the configuration parameter set, $\mathcal{S}$ is the security model set, and $\mathcal{O}$ is the overhead model set. 
%
% Security and overhead models also depend on the the set of circuit parameters $\mathcal{C}$.

% Zhou~\cite{zhou2017humble} previously proposed a generic schematic for logic encryption, but his generic schematic does not support the encryption techniques, such as \texttt{TTLock}~\cite{yasin2017ttlock} and \texttt{SFLL}~\cite{Yasin2017Provably-Secure}, where the original circuit function is perturbed. We introduce a new generic schematic called \textit{Universal Functional Architecture}, shown in Figure~\ref{fig:general_locking}, which is capable of representing the functional behaviors of any logic encryption schemes. 

\begin{figure}[t]
	\centering
	% \vspace{-10pt}
	\includegraphics[width=0.75\columnwidth]{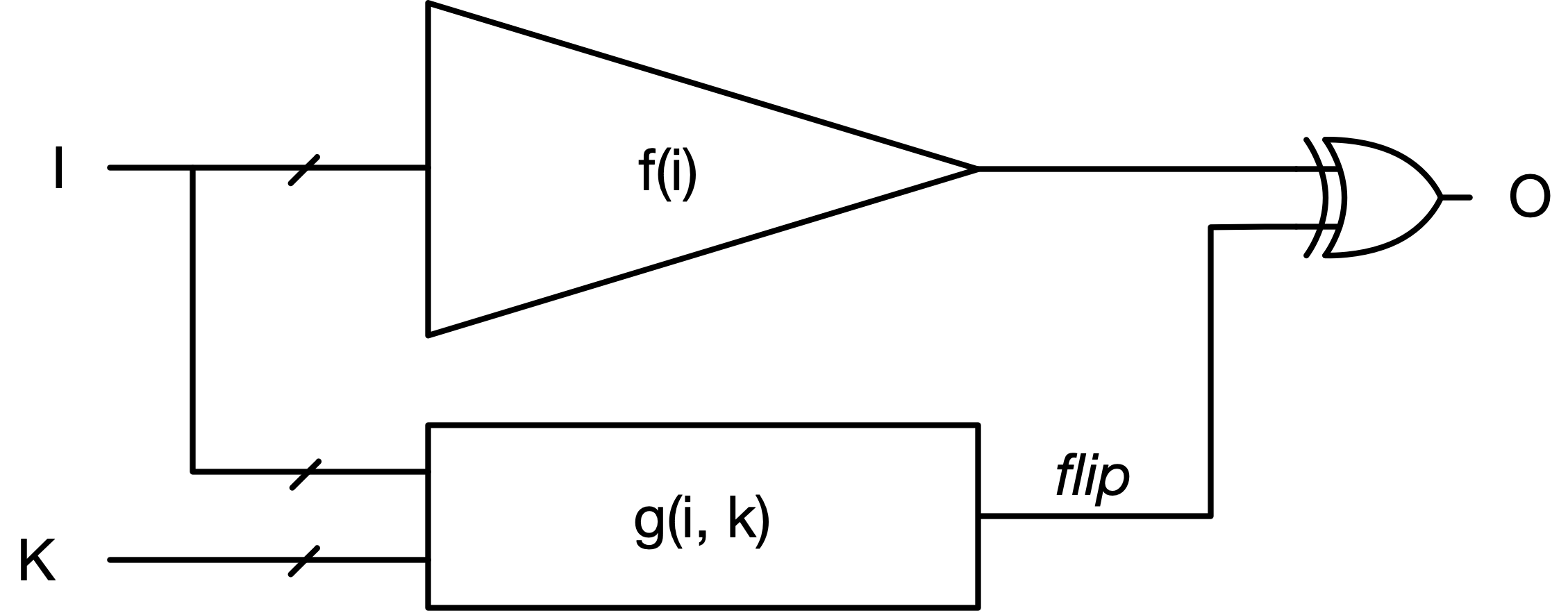}
    % 	\vspace{-10pt}
	\caption{General functional model for logic encryption. 
% 	\pierluigi{Let's just leave out the Q for simplicity. It will be just in the text.}
	}
	\vspace{-10pt}
	\label{fig:general_locking}
\end{figure}

\subsection{Security-Driven Metrics}

%against known or potential security concerns, which can then be associated to a specific attack model, or based on theoretical, worst-case estimates where little or no assumptions are made on the attack. 
% We provide a quantitative, formal definition of these concerns via the following universal security metrics. 
We can describe how the circuit output is affected by logic encryption via an \emph{error table}, such as the ones shown in Tab.~\ref{tab:truth_table_example}. 
% with $|I|=3$ and $|K|=3$. A red cross means that the output value is corrupted/protected, i.e., different from that of the original circuit. Otherwise, the output value is not corrupted. 
Based on the general functional model above and the associated error tables, we define a set of security-driven metrics that capture the quality and resilience of encryption.

\fakeparagraph{Functional Corruptibility} Functional corruptibility quantifies the amount of output error induced by logic encryption to protect the circuit function. 
% i.e. creating distorted output, thus having a high ratio of distorted output is often preferred. 
% Previous work~\cite{Rajendran2015Fault-Analysis-} proposed the Hamming Distance (HD) between the circuit output with the correct key and the one with the wrong key as a basis to quantify functional corruptibility.
% Specifically, the average HD is computed over all input and key patterns as follows: 
% $$HD=\frac{\sum_{i \in \mathbb{B}^n, k \in \mathbb{B}^{l}}\epsilon_{(i,k)}}{2^{|I|}\cdot 2^{|K|}\cdot |O|},$$
% where $\epsilon_{(i,k)}$ is the number of incorrect output bits for input $i$ and key $k$.
Consistently with the literature~\cite{shamsi2019approximation}, we define the \emph{functional corruptibility} $E_{FC}$ as the ratio between the number of corrupted output values and the total number of primary input and key configurations (the entries in the error table), i.e.,
% The notion of \emph{functional corruptibility} regards the output as a vector signal and defines its overall correctness independently of the number of ports. In this paper, we follow this definition, i.e., 
% $$ 
% E_{FC} = \frac{1}{2^{|I|}\cdot 2^{|K|}}\sum_{i\in \mathbb{B}^n}\sum_{k\in \mathbb{B}^l}\mathbbm{1}(f(i)\neq f'(i, k)),$$
$$ 
E_{FC} = \frac{1}{2^{n+l}} \sum_{i\in \mathbb{B}^n}\sum_{k\in \mathbb{B}^l}\mathbbm{1}(f(i)\neq f'(i, k)),$$
where $\mathbbm{1}(A)$ is the indicator function, evaluating to $1$ if and only if  event $A$ occurs. 
% The functional corruptibility can then be conveniently visualized in the form of a two-dimensional truth table, as shown in Table~\ref{tab:truth_table_example}. 

\begin{table}[t]
\centering
\caption{Error tables with $n=l=3$ ({\color{red}\ding{54}} and {\ding{52}} mark incorrect and correct output values, respectively).}
\vspace{-15pt}
\subfigure[\texttt{SARLock}]{
\scalebox{0.56}{
\centering
\begin{tabular}{|c|c|c|c|c|c|c|c|c|}
\hline
  & \textbf{K0} & \textbf{K1} & \textbf{K2} & \textbf{K3} & \textbf{K4} & \textbf{K5} & \textbf{K6} & \textbf{K7} \\ \hline
\textbf{I0} & {\color{red}\ding{54}} & \ding{52} & \ding{52} & \ding{52} & \ding{52} & \ding{52} & \ding{52} & \ding{52} \\ \hline
\textbf{I1} & \ding{52} & \ding{52} & {\color{red}\ding{54}} & \ding{52} & \ding{52}  & \ding{52}  & \ding{52}  & \ding{52} \\ \hline
\textbf{I2} &  \ding{52} & {\color{red}\ding{54}} & \ding{52} &  \ding{52} & \ding{52} & \ding{52} & \ding{52}  & \ding{52} \\ \hline
\textbf{I3} & \ding{52} & \ding{52}  & \ding{52} & {\color{red}\ding{54}} & \ding{52} & \ding{52}  & \ding{52} & \ding{52}  \\ \hline
\textbf{I4} & \ding{52}  & \ding{52}  & \ding{52}  & \ding{52}  & {\color{red}\ding{54}} & \ding{52} & \ding{52}  & \ding{52}  \\ \hline
\textbf{I5} & \ding{52} &  \ding{52} & \ding{52}  & \ding{52} & \ding{52} &  \ding{52} & \ding{52} & \ding{52} \\ \hline
\textbf{I6} & \ding{52} &\ding{52}  & \ding{52}  & \ding{52} & \ding{52} & {\color{red}\ding{54}}  & \ding{52} & \ding{52} \\ \hline
\textbf{I7} & \ding{52} & \ding{52} & \ding{52} & \ding{52}  & \ding{52}  & \ding{52} &  \ding{52} & {\color{red}\ding{54}} \\ \hline
\end{tabular}}
}
\subfigure[\texttt{SFLL-HD0}]{
\centering
\scalebox{0.56}{
\begin{tabular}{|c|c|c|c|c|c|c|c|c|}
\hline
  & \textbf{K0} & \textbf{K1} & \textbf{K2} & \textbf{K3} & \textbf{K4} & \textbf{K5} & \textbf{K6} & \textbf{K7} \\ \hline
\textbf{I0} & {\color{red}\ding{54}} & \ding{52} & \ding{52} & \ding{52} & \ding{52} & \ding{52} & \ding{52} & \ding{52} \\ \hline
\textbf{I1} &  \ding{52} & {\color{red}\ding{54}} &\ding{52}& \ding{52} & \ding{52}  & \ding{52}  & \ding{52}  & \ding{52} \\ \hline
\textbf{I2} &  \ding{52} & \ding{52}& {\color{red}\ding{54}}  &  \ding{52} & \ding{52} & \ding{52} & \ding{52}  & \ding{52} \\ \hline
\textbf{I3} & \ding{52} & \ding{52}  & \ding{52} & {\color{red}\ding{54}} & \ding{52} & \ding{52}  & \ding{52} & \ding{52}  \\ \hline
\textbf{I4} & \ding{52}  & \ding{52}  & \ding{52}  & \ding{52}  & {\color{red}\ding{54}} & \ding{52} & \ding{52}  & \ding{52}  \\ \hline
\textbf{I5} & \ding{52} &  \ding{52} & \ding{52}  & \ding{52} & \ding{52} &  {\color{red}\ding{54}} & \ding{52} & \ding{52} \\ \hline
\textbf{I6} & {\color{red}\ding{54}} &{\color{red}\ding{54}}  & {\color{red}\ding{54}}  & {\color{red}\ding{54}} & {\color{red}\ding{54}} & {\color{red}\ding{54}}  & \ding{52} & {\color{red}\ding{54}} \\ \hline
\textbf{I7} & \ding{52} & \ding{52} & \ding{52} & \ding{52}  & \ding{52}  & \ding{52} &  \ding{52} & {\color{red}\ding{54}} \\ \hline
\end{tabular}}
}
\vspace{-15pt}
\label{tab:truth_table_example}
\end{table}

\begin{figure*}[t]
	\centering
	\vspace{-20pt}
	\subfigure[]{
  \centering
  \includegraphics[width=0.5\columnwidth]{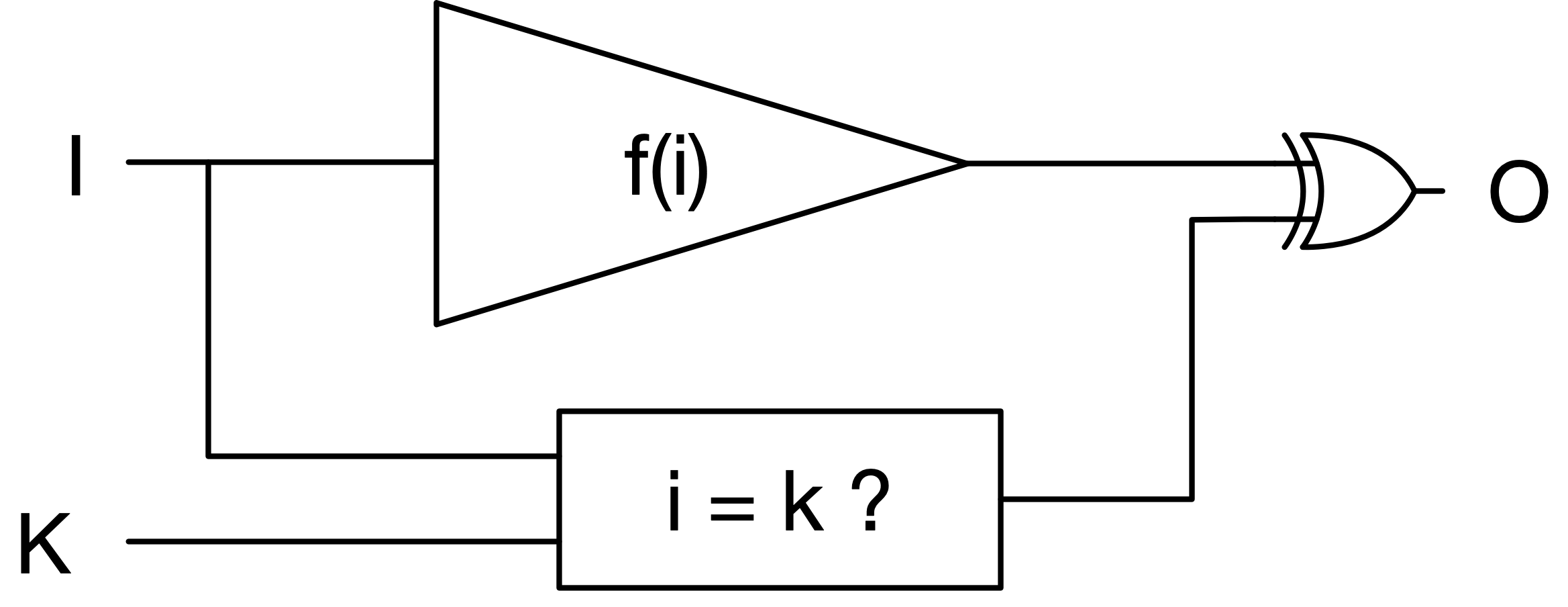}
  % \caption{fig1}
  }%
  \subfigure[]{
  \centering
  \includegraphics[width=0.5\columnwidth]{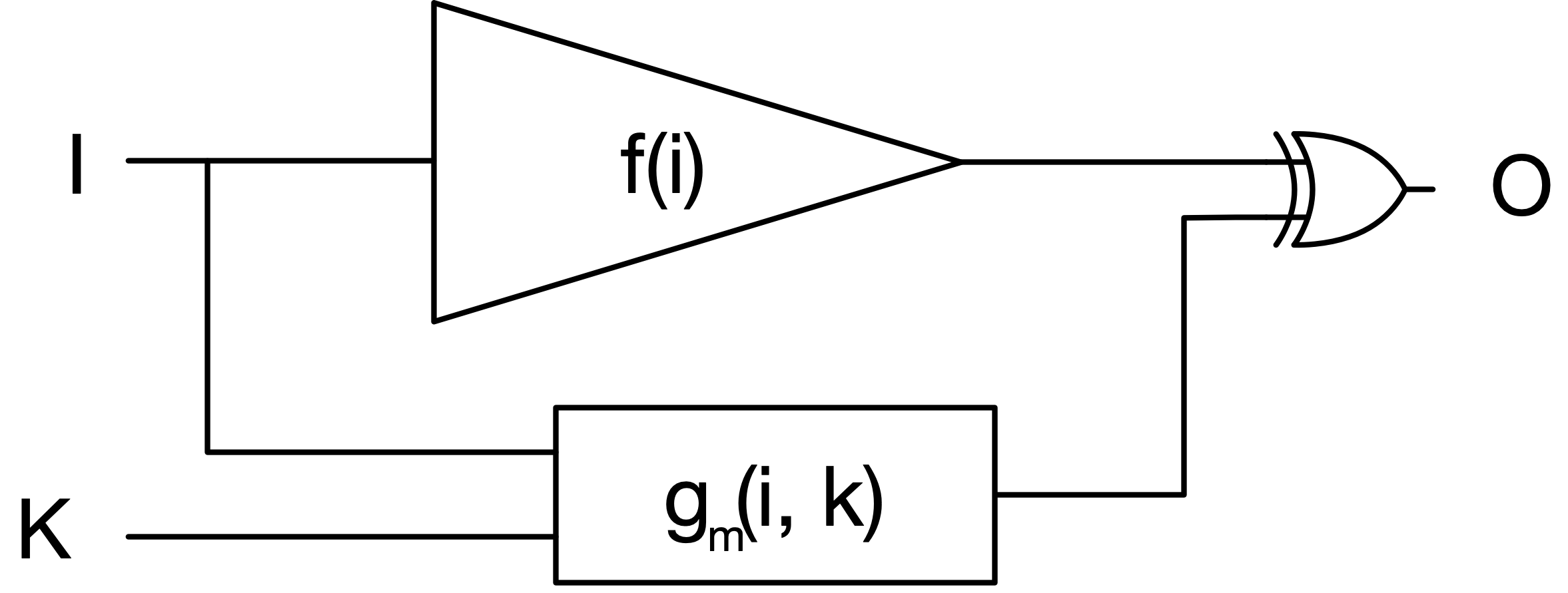}
  % \caption{fig1}
  }%
  \subfigure[]{
  \centering
  \includegraphics[width=0.5\columnwidth]{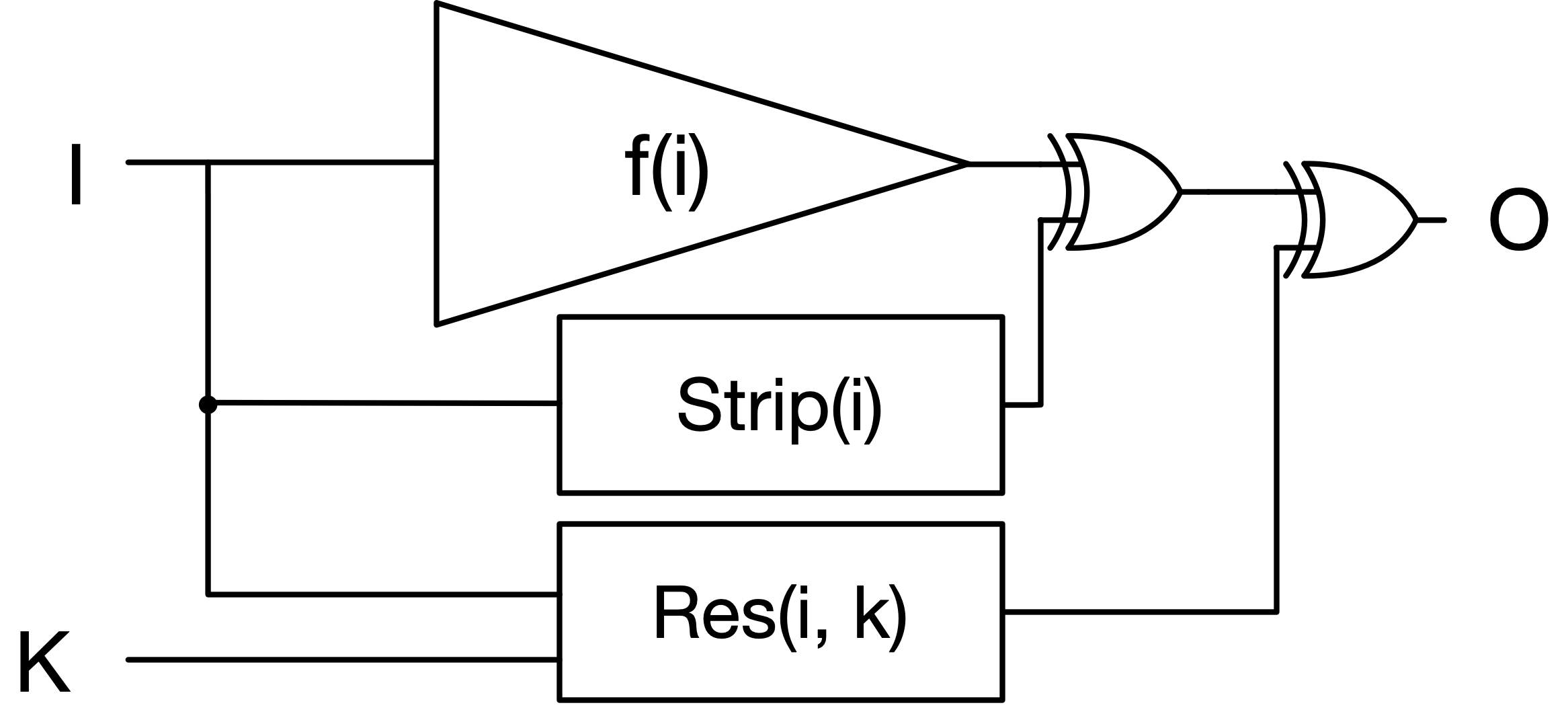}
  % \caption{fig1}
  }%
  \subfigure[]{
  \centering
  \includegraphics[width=0.5\columnwidth]{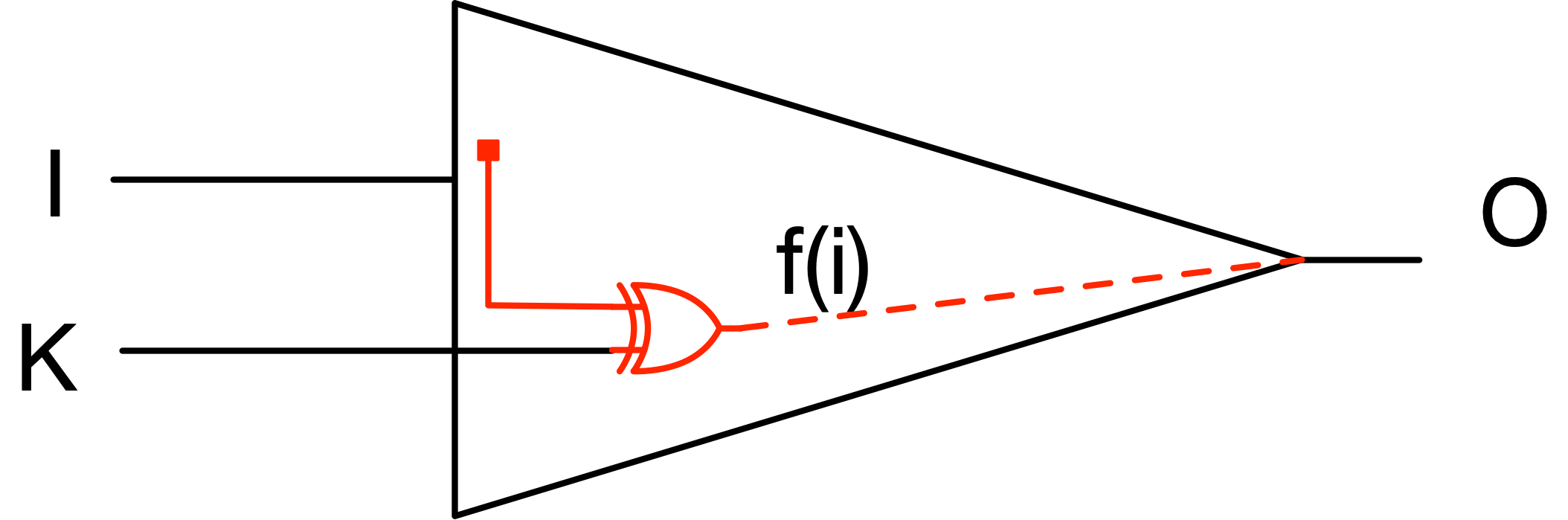}
  % \caption{fig1}
  }%
	\vspace{-10pt}
	\caption{Functional model for (a) \texttt{SARLock}, (b) \texttt{DTL}, (c) \texttt{SFLL}, and (d) \texttt{FLL}. 
	}
	\label{fig:general_locking_decomposition}
	\vspace{-10pt}
\end{figure*}

\fakeparagraph{SAT Attack Resilience ($t_{SAT}$)} 
A SAT attack~\cite{Subramanyan2015Evaluating-the-} assumes that 
the attacker has access to the encrypted netlist and an operational (deobfuscated) circuit, used as an oracle, to query for correct input/output pairs. The goal is to reconstruct the exact circuit function by retrieving a correct key. At 
each iteration, the attack solves a SAT problem to search for a \emph{distinguishing input pattern} (DIP), that is, an input pattern $i$ that provides different output values for different keys, i.e., such that 
$\exists \ k_1\neq k_2, f'(i, k_1) \neq f'(i, k_2)$. The attack then queries the oracle to find the correct output $f(i)$ and incorporate this information in
% A DIP helps eliminate a group of wrong keys by comparing outputs from two instances of the encrypted circuit for different keys as well as leveraging the correct output from the oracle. 
% After that, a new set of clauses, i.e., $f'(DIP, k_1)=f'(DIP, k_2)=f(DIP)$, will be added to 
the original SAT formula to constrain the search space for the following iteration. Therefore, all the keys leading to an incorrect output value for the current DIP will be pruned out of the search. Once the SAT solver cannot find a new DIP, the SAT attack terminates marking the remaining keys as correct. 

Consistently with the literature~\cite{Yasin2016SARLock:-SAT-at, Yasin2017Provably-Secure}, we quantify the hardness of this attack using the number of SAT queries, hence the number of DIPs, required to obtain the circuit function. 
Predicting this number in closed form is challenging, since it relates to solving a combinatorial search problem, in which the search space generally depends on the circuit properties and the search heuristics on the specific solver or algorithm adopted. Current approaches~\cite{Yasin2017Provably-Secure} adopt probabilistic models, where the expected number of DIPs is computed under the assumption that the input patterns are searched according to a uniform distribution. We adopt, instead, a worst-case conservative model and use the minimum number of DIPs to quantify the guarantees of an encryption technique in terms of SAT-attack resilience.
The duration of the attack also depends on the circuit size and structure, since they affect the runtime of each SAT query. 
% For example,  circuits including AND-tree  structures~\cite{Subramanyan2015Evaluating-the-} or Goldreich's one-way functions~\cite{goldreich1989hard,zhou2017humble} generate difficult SAT instances. 
In this paper, we regard the runtime of each SAT query as a constant and leave a more accurate % circuit-aware 
modeling of the duration of the attack for future work.

\fakeparagraph{Approximate SAT-Attack Resilience ($E_{APP}$)}
Approximate SAT attacks, such as  \emph{AppSAT}~\cite{shamsi2017appsat} and \emph{Double-DIP}~\cite{shen2017double},  perform a variant of a SAT attack but terminate earlier, when the error rate at the PO is ``low enough,'' providing a sufficient approximation of the circuit function. In this paper, we take a worst-case approach by assuming that an approximate SAT attack terminates in negligible time, and define the approximate SAT-attack resilience ($E_{APP}$) as the minimum residual error rate that can be obtained with an incorrect key (different than $k^*$), i.e., 
% An approximate SAT attack makes the same attack assumptions as in the SAT attack, with the  difference that the attacker aims to restore only an approximation of the circuit function. Therefore, approximate SAT attacks usually terminate earlier than SAT attacks and the residual functional corruptibility can be used as a resilience metrics, quantifying how close is the reconstructed circuit functionality with respect to the one of the original circuit. %
% Specifically, in this paper, we use the minimum functional corruptibility  as the worst case approximate functionality that can be recovered, i.e.,  
% $$ E_{APP} = \min_{k \in \mathbb{B}^{|K|} \setminus \{k^*\}}\frac{\epsilon_k}{2^{|I|}},$$
$$ E_{APP} = \min_{k \in \mathbb{B}^l \setminus \{k^*\}}\frac{\epsilon_k}{2^n},$$
where $\epsilon_k$ is the number of incorrect output values for key input $k$. 
% The higher the minimum functional corruptibility, the higher the circuit resilience to approximate SAT attacks. 

\fakeparagraph{Removal Attack Resilience ($E_{REM}$)} A removal attack consists in  directly removing all the added encryption logic to unlock a circuit, e.g., by bypassing the flip signal~\cite{Yasin2017Provably-Secure} or  the key-controlled XOR/XNOR   gates~\cite{chakraborty2018sail}. 
We make the worst-case assumption that all the key-related components can be removed from the encrypted netlist in negligible time. We then define the resilience metric as the ratio of input patterns that are still protected after removal, i.e., 
$$ E_{REM} = \frac{\sum_{i\in \mathbb{B}^n}\mathbbm{1}(f_{REM}(i)\neq f(i))}{2^n}$$
where $f_{REM}(.)$ is the Boolean function obtained after removing all the key-related components. 

% \fakeparagraph{Resilience to Structural Analysis-Based and Learning-Based Attacks (SL)} Both the function and the structure of an encrypted circuit may expose patterns
% that can be exploited by structural analysis-based and learning-based attacks to gather information about the function $f$ or the correct key $k^*$. Intuitively, the ``signature'' of a XOR gate within an encryption method is easier to recognize than the one of a LUT, capable of implementing multiple functions. We quantify the risk of leaking information via the circuit function or structure by assigning a score $r$ to each encryption primitive in our library, based on the complexity of learning the associated Boolean function $f$ (see, e.g., \cite{shamsi2019locking}) and analyzing the associated structure (see, e.g., \cite{chakraborty2018sail}). For example, $r$ is higher for a XOR-based than a LUT-based implementation due to the higher complexity of learning each entry of the truth table associated with a LUT. This risk score can then be traded with the overhead of an implementation  to guide the design space exploration process. 

\section{Encryption Methods}\label{sec:obf_lib}

We apply the general model and metrics in Sec.~\ref{sec:theory} to four logic encryption techniques, namely, \texttt{SARLock}, \texttt{SFLL}, \texttt{DTL}, and \texttt{FLL}, 
showing that it encompasses existing methods, including state-of-the-art techniques. 
Tab.~\ref{tab:security_model_example} summarizes the security models with respect to the four security metrics described in Sec.~\ref{sec:theory}. The models, including proofs for our results, are discussed in detail below. 
% A detailed discussion of our models, including proofs for our results, can be found in the extended technical report~\cite{missing-reference-2}. 
% The results are summarized in Tab.~\ref{tab:security_model_example}.

\begin{table}[t]
\centering
\caption{Security metrics for four logic encryption techniques ($n=|I|$, $m=|O|$, and $l=|K|$). 
}
\label{tab:chance_cons_formu}
% {\setlength\extrarowheight{10pt}
\resizebox{\columnwidth}{!}{
\scalebox{1}{
\begin{tabular}{c|c|c|c|c}
% \hline
\textbf{ }
& $\mathbf{E_{FC}}$ 
& $\mathbf{t_{SAT}}$ 
& $\mathbf{E_{APP}}$ 
& $\mathbf{E_{REM}}$ \\
\hline
\rule[-0.75em]{0pt}{1em}\rule[0.25em]{0pt}{1em}
\textbf{SARLock} & $\approx\frac{1}{2^{l}}$ & $\min \left\{2^{l}, 2^{n}\right\}$  & $\frac{1}{2^{l}}$   & 0    \\ 
\hline
\rule[-1em]{0pt}{1.6em}\rule[0.5em]{0pt}{1.6em}
% read a bit online how it works with the units! 
\textbf{DTL}  & $\approx\frac{\left(2\left(2^{2^{L}}-1 \right)\right)^N}{2^{l}}$ & $\min \left\{\frac{2^{l}}{(2(2^{2^{L}}-1))^N}, 2^{n}\right\}$  & $\frac{\left(2\left(2^{2^{L}}-1 \right)\right)^N}{2^{l}}$   & 0    \\ 
\hline
\rule[-0.75em]{0pt}{1.25em}\rule[0.5em]{0pt}{1.5em}
\textbf{SFLL} & $\frac{\binom{l}{h}\left[2^{l} - \binom{l}{h}\right]}{2^{2l-1}}$ & $<< \exp(l)$  & $\frac{2\left[\binom{l}{h}-2\binom{l-2}{h-1}\right]}{ 2^{n}}$   & $\binom{l}{h}/2^{l}$    \\ 
\hline
% \rule{0pt}{1.5em}
\textbf{FLL}  & $[0.3, 0.5]$ & $<< \exp(l)$ & $<< E_{FC}$ & $0$    \\ 
% \hline
\end{tabular}}
}
\vspace{-10pt}
\label{tab:security_model_example}
\end{table}

\fakeparagraph{SARLock}
\texttt{SARLock} combines the output of the original circuit $f(i)$ with the one-point function $\mathbbm{1}(i=k)$. 
It can then be mapped to the general functional model where $g = \mathbbm{1}(i=k)$, as shown in Fig.~\ref{fig:general_locking_decomposition}a. 
The parameter set $Q_{\texttt{SARLock}}$ includes the key size $l=|K|$. Consistently with previous work~\cite{Yasin2016SARLock:-SAT-at}, we derive the closed form expressions in Tab.~\ref{tab:security_model_example} as stated by the following result. 

\begin{theorem}\label{theorem:sarlock}
For a circuit encrypted with \texttt{SARLock}, let $l$ and $n$ be the key size and the primary input size, respectively. Let $E_{FC}$ be the functional corruptibility,  $t_{SAT}$ the SAT-attack resilience, $E_{APP}$ the approximate SAT-attack resilience, and  $E_{REM}$ the removal attack resilience. Then, the following equations hold: $E_{FC} = \frac{1}{2^{l}}$, $t_{SAT} = \min\left\{2^l, 2^n\right\}$,  $E_{APP} = \frac{1}{2^{l}}$, and $E_{REM} = 0$. 
\end{theorem}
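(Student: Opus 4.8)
My plan is to derive all four quantities from the explicit error-table structure induced by the SARLock flip function $g(i,k)=\mathbbm{1}(i=k)$, for which $f'(i,k)=f(i)\oplus\mathbbm{1}(i=k)$. The encrypted output disagrees with $f(i)$ exactly when the compared bits of $i$ and $k$ coincide, so each key value corrupts a single block of $2^{n-l}$ inputs, producing a ``diagonal'' table with (up to the masked correct key) $2^n$ corrupted cells. With this picture fixed, three of the four metrics reduce to direct counting. For $E_{FC}$ I would substitute $f'$ into the definition, note $\mathbbm{1}(f(i)\neq f'(i,k))=\mathbbm{1}(i=k)$, and sum the corrupted cells to get $2^n$, so that $E_{FC}=2^n/2^{n+l}=1/2^l$ (the lone uncorrupted correct-key block accounts for the lower-order gap to the approximate form in the summary table). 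For $E_{APP}$ I would observe that the per-key error count $\epsilon_k=2^{n-l}$ is identical across incorrect keys, so the minimum is attained trivially and $E_{APP}=2^{n-l}/2^n=1/2^l$. For $E_{REM}$ I would use that the added logic contributes only the output XOR, so bypassing the flip signal gives $f_{REM}\equiv f$; then every indicator $\mathbbm{1}(f_{REM}(i)\neq f(i))$ vanishes and $E_{REM}=0$.

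The substantive part is $t_{SAT}$, which I would treat by a solver-independent analysis of the attack dynamics. The decisive fact is that for any fixed input $i$ the value $f'(i,k)$ equals $f(i)$ for every key except the unique matching one, for which it is $\overline{f(i)}$. Hence, once the oracle returns $f(i)$, processing a distinguishing input pattern at $i$ rules out exactly one incorrect key and no more. Since there are $2^l$ candidate keys, one of which is correct, and each iteration eliminates a single wrong key, the attack cannot halt before on the order of $2^l$ patterns have been queried. When $l>n$ the supply of distinct inputs is the binding constraint and caps the count at $2^n$; combining the regimes gives $t_{SAT}=\min\{2^l,2^n\}$, with the standard additive constant absorbed into the stated bound.

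I expect $t_{SAT}$ to be the main obstacle. The counts for $E_{FC}$, $E_{APP}$, and $E_{REM}$ are immediate once the diagonal structure is written down, but the SAT bound requires justifying that no choice of distinguishing pattern can prune more than one key --- the very property that separates SARLock's one-point corruption from high-corruptibility schemes --- and then arguing this lower bound is tight, so the attack genuinely needs the full $\min\{2^l,2^n\}$ queries rather than merely being bounded by it. Some care is also needed to separate the $l\le n$ and $l>n$ regimes cleanly in order to produce the minimum.
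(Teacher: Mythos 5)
Your proposal is correct and follows essentially the same route as the paper: counting the $2^{n-l}$ corrupted inputs per incorrect key in the diagonal error table to get $E_{FC}$ and $E_{APP}$, arguing that each DIP eliminates exactly one incorrect key (capped by the $2^n$ available input patterns) to get $t_{SAT}$, and bypassing the flip signal to get $E_{REM}=0$. The only cosmetic difference is that the paper dispenses with your $l>n$ regime up front by noting that SARLock's comparator structure forces $l\le n$, whereas you keep both regimes and recover the same $\min\{2^l,2^n\}$ expression.
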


\begin{proof}
% Note that the key size $l$ does not have to be equal to the primary input size $n$.
% When $l<n$, the \texttt{SARLock} comparator selects $l$ PI ports and compare their value $i$ with the key pattern $k$. For those PI ports that are not selected, their values does not change the output of the comparator. 
We observe that the key size $l$ can be at most equal to the primary input size $n$, i.e., $l \le n$ holds. For an incorrect key $k$, the output is corrupted only when the input $i$ is equal to $k$. Therefore, the number of corrupted output patterns is $2^{n-l}$ for each incorrect key. Because there are $2^l -1$ incorrect keys, we can compute $E_{FC}$ and $E_{APP}$ as follows: %
\begin{equation}
    \begin{aligned}
        E_{FC} &= \frac{2^{n-l}\cdot (2^{l}-1)}{2^{n}\cdot 2^{l}} \approx \frac{1}{2^l},
    \end{aligned}
\end{equation}
\begin{equation}
    \begin{aligned}
        E_{APP} &= \frac{2^{n-l}}{2^n} = \frac{1}{2^l}.
    \end{aligned}
\end{equation}

By definition of \texttt{SARLock}, each input pattern can only exclude one incorrect key at each iteration of a SAT attack (see, for example, the error table in Tab.~\ref{tab:truth_table_example}). Because there are $2^l - 1$ incorrect keys to exclude, and the number of SAT attack iterations is bounded above by $2^n$, the total number of primary input patterns, %thus there is no improvement in SAT attack resilience for key sizes larger than $n$. Therefore, 
we can compute $t_{SAT}$ as follows:
\begin{equation}
    \begin{aligned}
        t_{SAT} &= \min\left\{2^l, 2^n\right\}. 
    \end{aligned}
\end{equation}

Finally, by the definition of removal attack resilience, once the flip signal of the one-point function is recognized and bypassed, the original functionality of the circuit is fully restored, leading to $E_{REM}=0$. 
\end{proof}

The use of a one-point function, especially when the key size $l$ is very large, results in very low functional corruptibility $E_{FC}$ but exponential SAT-attack resilience $t_{SAT}$, as stated by Theorem~\ref{theorem:sarlock}. A moderately high $E_{FC}$ can still be achieved, but this happens with small key sizes. For example, $E_{FC} =  0.25$ can be achieved for $l=2$. 
% The number of DIPs is constant and bounded above by $2^n$, thus there is no improvement in SAT resilience for key sizes larger than $n=|I|$. 
% Finally, $E_{REM}=0$ because the circuit will show zero corruptibility after the attacker bypasses the flip signal.

% \pierluigi{One word here to explain why removal is $0$.}
% will not gain more benefit but add to the overhead. Therefore, the technique introduces the following constraint: 
% \begin{equation}\label{eqn:sarlock_k_range}
%     \begin{aligned}
%         2 \leq |K|_{SARLock} \leq |I|
%     \end{aligned}
% \end{equation}
 
% \pierluigi{Can we drop the constraint and obtain expressions that hold in all cases and incorporate the effect of $I$?}

\fakeparagraph{Diversified Tree Logic (DTL)}
% Figure~\ref{fig:general_locking_decomposition}(b) shows the schematic for \texttt{DTL}. 
% This technique is based on \texttt{SARLock} or \texttt{Anti-SAT} 
The one-point functions used in \texttt{SARLock} or \texttt{Anti-SAT}~\cite{Xie2018Anti-SAT:-Mitig} are based on AND-tree structures. An example of a four-input AND-tree is shown in Fig.~\ref{fig:dtl_detail}.
\begin{figure}[t]
	\centering
	% \vspace{-10pt}
	\includegraphics[width=0.55\columnwidth]{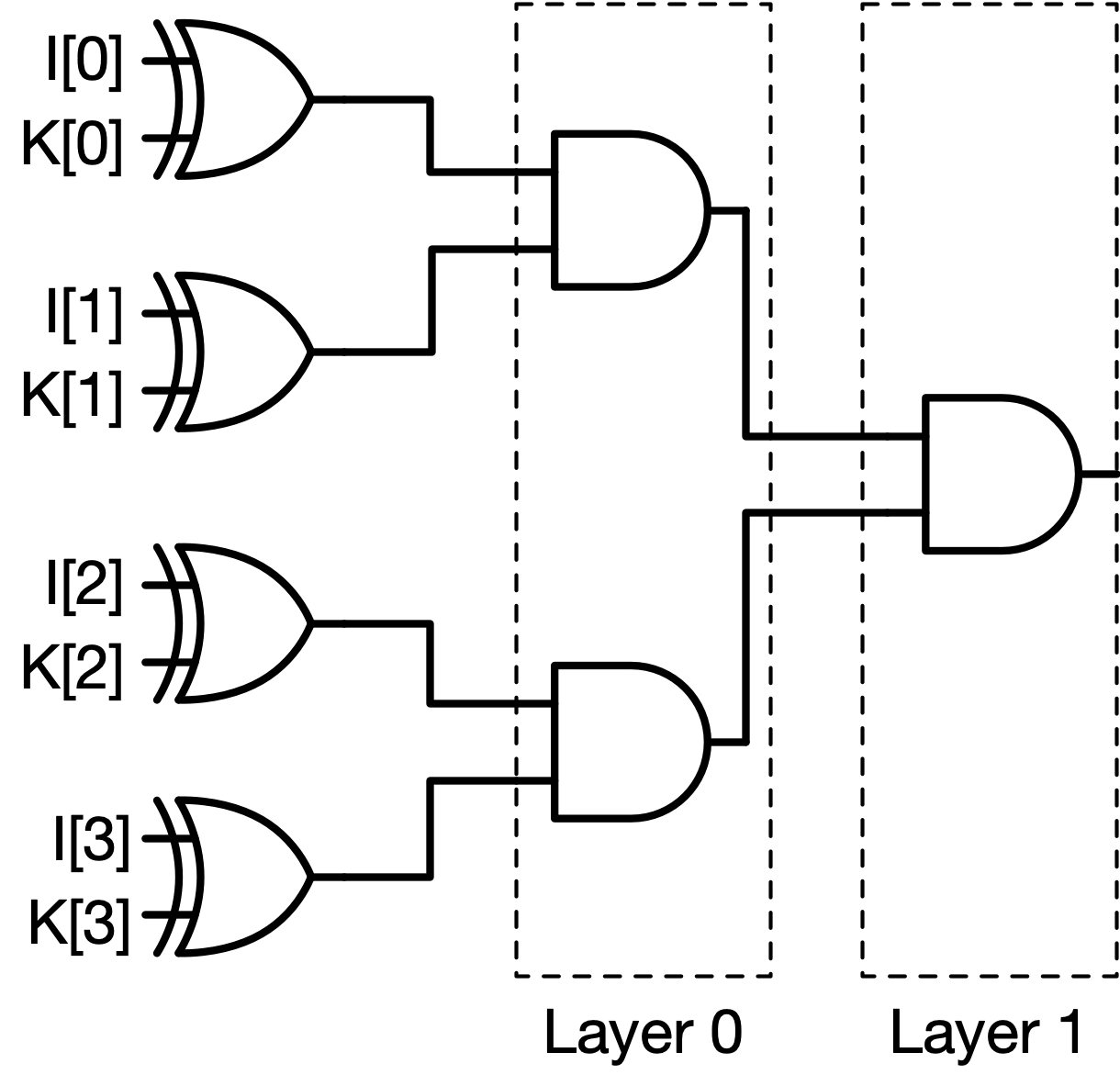}
    % 	\vspace{-10pt}
	\caption{A four-input AND-tree structure.
	}
% 	\vspace{-10pt}
	\label{fig:dtl_detail}
\end{figure}
\texttt{DTL} borrows such structures from \texttt{SARLock} or \texttt{Anti-SAT} and appropriately replaces some of the AND gates with another type of gate, i.e., XOR, OR, or NAND, to obtain a multi-point function $g_m(i.k)$, as shown in Fig.~\ref{fig:general_locking_decomposition}b. 
% \pierluigi{What conventions are we using for capitalizing the inputs in the figures? Also, is the tree starting with OR functions?}
% replaces the one-point function used in \texttt{SARLock} or \texttt{Anti-SAT}~\cite{Xie2018Anti-SAT:-Mitig} with a multi-point function $g_m(i, k)$, as shown in Figure~\ref{fig:general_locking_decomposition}b. 
The parameter set $Q_{\texttt{DTL}}$ includes: (1) the key size $|K|$; (2) the type of point-function $T$, e.g., $T \in \{\texttt{SARLock},\texttt{Anti-SAT}\}$; (3) the replacement tuple $(X, L, N)$, where $X \in \{\texttt{XOR}, \texttt{OR}, \texttt{NAND}\}$ is a gate type for the replacement, and $L$ and $N$ denote the layer and number of gates selected for replacement, respectively, with $0 \leq L \leq \lceil \log_2(|K|)\rceil -1$, $0\leq N \leq 2^{\lceil \log_2(|K|)\rceil-L-1}$, and $|K| \geq 2$. 
% to form an AND-tree should be $2$.  
\texttt{DTL} then modifies the AND-tree by replacing $N$ gates from layer $L$ with gates of type $X$. 
% Constraint~\eqref{eqn:sarlock_k_range} will hold for \texttt{DTL} as well. 
% \pierluigi{Let's put it into formulas?}
% The base point-function type $B_{pf}$ is a binary variables which decides whether \texttt{DTL} follows the \texttt{SARLock} or \texttt{Anti-SAT} architecture. 
% \texttt{DTL} proceeds by selecting a point-function in $T$ and replacing $N$ two-input AND gates in layer $L$ and with a gate of type $X$. 
% \pierluigi{Does this need to be one fixed, or can be any?}
Tab.~\ref{tab:security_model_example} shows the expressions obtained when $T=$ \texttt{SARLock}, and $X=$ \texttt{XOR}, as stated by the following theorem. 
% \pierluigi{Are the $N$ replaced gates randomly selected?}

\begin{theorem}\label{DTL_theorem}
For a circuit encrypted with \texttt{DTL} of type $T=$ \texttt{SARLock} and replacement gate $X=$ \texttt{XOR}, let $l$ and $n$ be the key size and the primary input size, respectively. Let $E_{FC}$ be the functional corruptibility,  $t_{SAT}$ the SAT-attack resilience, $E_{APP}$ the approximate SAT-attack resilience, and  $E_{REM}$ the removal attack resilience. Then, the following equations hold: 
$E_{FC}=\frac{\left(2\left(2^{2^{L}}-1 \right)\right)^N}{2^{l}}$, $t_{SAT}=\min \left\{\frac{2^{l}}{(2(2^{2^{L}}-1))^N}, 2^{n}\right\}$, $E_{APP}=\frac{\left(2\left(2^{2^{L}}-1 \right)\right)^N}{2^{l}}$, and $E_{REM}=0$. 
\end{theorem}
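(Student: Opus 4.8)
The plan is to reduce all four quantities to a single combinatorial count, namely the size of the flip set $S = \{ d \in \mathbb{B}^l : \mathrm{tree}(d) = 1\}$, where $d = i \oplus k$ is the bitwise comparison of the $l$ comparator inputs against the key and $\mathrm{tree}$ is the AND-tree after the $N$ XOR substitutions. Once $|S|$ is known, I would reuse the \texttt{SARLock} bookkeeping of Theorem~\ref{theorem:sarlock} essentially verbatim, with its per-key count of $1$ corrupted comparator pattern replaced by $|S|$. As in the \texttt{SARLock} analysis, I would first note that $l \le n$ and that the $n-l$ primary inputs not feeding the comparator are don't-cares, so they multiply every pattern count by a common factor $2^{n-l}$ that cancels in all the ratios.

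First I would compute $|S|$. A gate at layer $L$ has two children, each an AND over $2^L$ of the leaf XNOR signals, so each child equals $1$ on exactly $1$ of the $2^{2^L}$ settings of its bits and $0$ on the remaining $2^{2^L}-1$. Replacing this AND by an XOR makes the gate fire iff exactly one child is $1$, i.e.\ on $2(2^{2^L}-1)$ of the $2^{2^{L+1}}$ settings of the subtree's bits, instead of the single all-match setting of the original AND. Because everything above layer $L$ remains an AND, the overall tree is $1$ iff every layer-$L$ gate outputs $1$; the $N$ replaced subtrees are disjoint and each untouched branch contributes its single all-match setting (a factor of $1$), so the counts multiply and $|S| = (2(2^{2^L}-1))^N$. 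I regard this multiplicativity argument over disjoint subtrees as the crux of the proof.

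With $|S|$ in hand, the remaining metrics follow the \texttt{SARLock} template. Each of the $2^l - 1$ incorrect keys corrupts exactly $|S|\cdot 2^{n-l}$ input patterns while the correct key corrupts none, giving $E_{FC} = (2^l-1)|S|2^{n-l}/2^{n+l} = (2^l-1)|S|/2^{2l} \approx |S|/2^l$. By the symmetry of the comparator construction every incorrect key has the same error count, so the minimum residual error rate is $E_{APP} = |S|\cdot 2^{n-l}/2^n = |S|/2^l$. For $t_{SAT}$, a single distinguishing input $i$ rules out exactly the keys with $i \oplus k \in S$, i.e.\ the coset $i \oplus S$ of size $|S|$; covering all $\approx 2^l$ incorrect keys therefore needs about $2^l/|S|$ distinguishing inputs, capped by the $2^n$ available inputs, yielding $t_{SAT} = \min\{2^l/|S|, 2^n\}$. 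Finally, $E_{REM} = 0$ is inherited directly from \texttt{SARLock}: bypassing the flip signal removes the entire comparator-tree structure and restores $f$ exactly.

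The hard part will be making the two counting steps rigorous rather than heuristic. For $|S|$ I must be careful that the substitution really acts on disjoint subtrees and that the layering convention (each layer-$L$ child spanning $2^L$ leaves) is the one that produces the $2^{2^L}$ factor. For $t_{SAT}$, the clean identity $2^l/|S|$ presumes that the per-DIP key sets tile the key space, which is exact only when $S$ is a subgroup (equivalently when $|S|$ is a power of two, e.g.\ $L=0$); for $L>0$, $|S| = 2^N(2^{2^L}-1)^N$ is not a power of two, so $2^l/|S|$ should be read as the conservative estimate of the minimum DIP count prescribed by the worst-case model of Sec.~\ref{sec:theory}, and I would state it as such rather than as an exact integer.
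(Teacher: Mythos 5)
Your proof is correct, and its overall skeleton matches the paper's: compute the onset size of the modified point function, then run the \texttt{SARLock}-style bookkeeping of Theorem~\ref{theorem:sarlock} to get $E_{FC}\approx |S|/2^l$, $E_{APP}=|S|/2^l$, $t_{SAT}=\min\{2^l/|S|,2^n\}$, and $E_{REM}=0$ by bypassing the flip signal. The difference is in where the two key counts come from: the paper simply imports both the onset size $\left(2\left(2^{2^{L}}-1\right)\right)^N$ and the DIP count $2^l/\left(2\left(2^{2^{L}}-1\right)\right)^N$ by citation to Shamsi~et~al.~\cite{shamsi2019approximation}, whereas you derive them---the onset size via the per-child onset count and multiplicativity over the $N$ disjoint replaced subtrees, and the DIP count via the observation that each DIP eliminates the coset $i\oplus S$ of size $|S|$, so covering the $\approx 2^l$ incorrect keys needs $\approx 2^l/|S|$ DIPs. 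This makes your argument self-contained where the paper's is not, and your closing caveat is a genuine refinement the paper glosses over: for $L>0$ the onset size $2^N\left(2^{2^{L}}-1\right)^N$ is not a power of two, so $S$ cannot be an (affine) subgroup, the cosets cannot tile $\mathbb{B}^l$ exactly, and $2^l/|S|$ is properly a conservative lower-bound estimate of the minimum DIP count rather than an exact integer---which is precisely the worst-case reading the paper's $t_{SAT}$ metric is defined to take, and is consistent with the paper stating only an idealized closed form. The only idealization you share with the paper (so it is not a gap relative to it) is treating every incorrect key as having exactly $|S|\cdot 2^{n-l}$ errors, which ignores the masking that guarantees $g(i,k^*)=0$; this is why the $E_{FC}$ expression is really an approximation, as both you and the paper's proof acknowledge with the $(2^l-1)/2^l\approx 1$ step.
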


\begin{proof}
According to the analysis 
% \pierluigi{Never use the citation as a noun; that is just a reference in brackets. Revise whenever needed.} 
by Shamsi~et~al.~\cite{shamsi2019approximation}, replacing an AND gate in the first layer ($L=0$) of a \texttt{SARLock} block with a XOR gate changes the onset size of the one-point function from 1 to 2. More generally, replacing $N$ AND gates in layer $L$ changes the onset size to $\left(2\left(2^{2^{L}}-1 \right)\right)^N$. As a result, for each incorrect key (an incorrect column in the error table), there are $2^{n-l}\cdot \left(2\left(2^{2^{L}}-1 \right)\right)^N$ incorrect output patterns. $E_{FC}$ and $E_{APP}$ can then be computed as 
\begin{equation}
    \begin{aligned}
        E_{FC} &= \frac{(2^{l}-1)\cdot 2^{n-l}\cdot \left(2\left(2^{2^{L}}-1 \right)\right)^N}{2^{n}\cdot 2^{l}} \approx \frac{\left(2\left(2^{2^{L}}-1 \right)\right)^N}{2^l},
    \end{aligned}
\end{equation}
\begin{equation}
    \begin{aligned}
        E_{APP} &= \frac{2^{n-l}\cdot \left(2\left(2^{2^{L}}-1 \right)\right)^N}{2^n} = \frac{\left(2\left(2^{2^{L}}-1 \right)\right)^N}{2^l}.
    \end{aligned}
\end{equation}

From the analysis by  Shamsi~et~al.~\cite{shamsi2019approximation}, the number of SAT queries needed is $\frac{2^{l}}{(2(2^{2^{L}}-1))^N}$. Because of the upper bound due to the maximum number of input patters $2^n$, we obtain: 
% due to the same reason in \texttt{SARLock}. Therefore, 
\begin{equation}
    \begin{aligned}
        t_{SAT} &= \min \left\{\frac{2^{l}}{(2(2^{2^{L}}-1))^N}, 2^{n}\right\}.
    \end{aligned}
\end{equation}

Finally, the flip signal of \texttt{DTL} can be recognized and bypassed, which returns the full functionality of the original circuit and leads to 
% \begin{equation}
%     \begin{aligned}
$        E_{REM}=0$.
%    \end{aligned}
%\end{equation}
\end{proof}

% Expressions of $t_{SAT}$ and $E_{APP}$ can be directly computed based on the analysis in the literature~\cite{shamsi2019approximation}. 
\noindent In \texttt{DTL}, the error table has the same number of errors in each column, except for the correct key column, which makes $E_{FC}$ equal to $E_{APP}$. 
$N$
% , $T$, and $X$ 
can be tuned to increase $E_{APP}$ and $E_{FC}$ while $t_{SAT}$ decreases. 
% \pierluigi{Would like to discuss these trends.} 
Analogous results as in Theorem~\ref{DTL_theorem} % for a specific replacement gate of type \texttt{XOR}. However,  
can be derived for other configurations of $T$ and $X$~\cite{shamsi2019approximation}. 
% Interested readers could check the corresponding quantitative results of how the gate replacement changes the onset size of the AND-tree in Shamsi~et~al. and use the same proof process in Theorem~\ref{DTL_theorem} to derive the security metrics. 
% For our readers' convenience, we provide an approximate generalization of the relation between the security levels and the configuration parameters of DTL, shown in Equation~\eqref{eqn:approx_relation_dtl_fc}-\eqref{eqn:approx_relation_dtl_app}. 
Specifically, for all $X$ and $T$, we obtain: 
\begin{equation}\label{eqn:approx_relation_dtl_fc}
    \begin{aligned}
        E_{FC} = E_{APP} = O\left(2^{N\cdot 2^L-l}\right)
    \end{aligned}
\end{equation}
\begin{equation}\label{eqn:approx_relation_dtl_sat}
    \begin{aligned}
        t_{SAT} &= O \left(\min \left\{2^{l - N\cdot 2^L}, 2^n\right\} \right).
    \end{aligned}
\end{equation}
Based on the expressions above, the maximum $E_{FC}$ or $E_{APP}$ can be achieved when all the AND gates are replaced in a given layer. The approximate security levels in this scenario  show the following behavior: 
\begin{equation}
    \begin{aligned}
        E_{FC, max} = E_{APP, max} = O \left( 2^{-\frac{l}{2}} \right)
    \end{aligned}
\end{equation}
\begin{equation}
    \begin{aligned}
        t_{SAT} &= O \left( \min \left\{2^{\frac{l}{2}}, 2^n\right\} \right).
    \end{aligned}
\end{equation}

\fakeparagraph{SFLL} 
Fig.~\ref{fig:general_locking_decomposition}c shows the schematic of \texttt{SFLL}, where  the value of the primary output is given by $f(i)\oplus Strip(i) \oplus Res(i,k)$. Both the stripping circuit $Strip(i)$ and the restore circuit $Res(i,k)$ are point functions. The stripping block corrupts part of the original function, while the restore unit restores the correct value upon applying the correct key. \texttt{SFLL} can be mapped to the general functional model with $g(i,k) = Strip(i) \oplus Res(i,k)$. 
% \pierluigi{How do you handle that f(i) feeds two XOR while in the general model this does not happen?} 
% \yinghua{With the effective $g(i,k) = Strip(i) \oplus Res(i,k)$ which is derived from the associative property of XOR logic, \texttt{SFLL} fits perfectly with the general functional model. One of the two XOR gates is effectively part inside $g$.}
We focus on \texttt{SFLL-HD} where $Res(i,k)$ is a Hamming distance comparator. The parameter set $Q_{\texttt{SFLL}}$ includes $|K|$ and $h$, representing the key size and the HD parameter for the HD comparator, respectively. The comparator output will evaluate to one if and only if the HD between its inputs is $h$. The key size must be at most equal to the number of PI ports in the fan-in cone of the protected PO port, i.e., $0 \leq |K| \leq |I|$, while $h$ is at most equal to $|K|$. 
% \begin{equation}\label{eqn:sfll_k_range}
%     \begin{aligned}
%         0 \leq k_{SFLL} \leq |I|
%     \end{aligned}
% \end{equation}
% \begin{equation}\label{eqn:sfll_h_range}
%     \begin{aligned}
%         0 \leq h_{SFLL} \leq |K|_{SFLL}
%     \end{aligned}
% \end{equation} 

\texttt{SFLL} is the only technique  in Tab.~\ref{tab:security_model_example} that is resilient to removal attacks. 
% \pierluigi{We should discuss again the XOR/XNOR techniques because, once you eliminate the key gates, are we sure that we can reconstruct the full circuit function?} \yinghua{This is true due to the nature of this technique.}
In fact, at the implementation level, $Strip(i)$ can be merged with $f(i)$ to form a monolithic block (e.g., via a re-synthesis step or modification of internal signals of the original circuit) and, therefore, it becomes hard to remove. On the other hand, unlike \texttt{SARLock}, it does not guarantee exponential SAT-attack resilience. For example, as shown in Tab.~\ref{tab:truth_table_example}b for  $h=0$, selecting  one input pattern, such as $I6$, is enough to prune out all the incorrect keys and unlock the circuit after one SAT-attack iteration. Previous work~\cite{Yasin2017Provably-Secure} proposes a probabilistic model in terms of \emph{expected number of DIPs}, based on the following assumptions: (i) SAT solvers select input patterns with a uniform distribution; (ii) the probability of terminating a SAT attack is equal to the probability of finding one protected input pattern, i.e., finding one protected input pattern is enough to prune out all the incorrect keys and terminate a SAT attack. Based on this model, the average SAT resilience of \texttt{SFLL} is shown to increase exponentially with $l$. 

We find that the existing probabilistic models tend to become inaccurate when $h$ is different than $0$ or $l$, since, in these configurations, one protected input pattern is generally not enough to terminate a SAT attack. Moreover, these models tend to ignore the heuristics adopted by state-of-the-art SAT solver to accelerate the search. In this paper, we adopt, instead, a conservative metric in terms of \emph{minimum number of DIPs}.
% After performing several round of SAT attacks on \texttt{SFLL} encrypted circuits, we observe small numbers of DIPs for most of the time which do not show exponential SAT resilience. 
The following results % Theorem~\ref{theorem:min_cover} 
states the hardness of finding the minimum number of DIPs. 

\begin{theorem}\label{theorem:min_cover}
Given an encrypted Boolean function of 
% $f'(i, k)$, where $i$ and $k$ are 
the primary and key inputs, computing the minimum number of distinguishing input patterns (DIPs) for a SAT attack can be reduced to a min-set-cover problem, which is NP-hard~\cite{korte2012combinatorial}. 
% \pierluigi{You need to give more context in the theorem statements of the following theorems because somebody can read them by themselves and it is expected that you understand what the theorem says without reading what was before the statement. You should say: ``Given a circuit encrypted etc. etc.''} 
\end{theorem}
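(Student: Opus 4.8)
The plan is to recast the minimum-DIP computation as a covering problem over the set of incorrect keys and then exhibit a two-way correspondence with min-set-cover. Fix the encrypted function $f'$ and the correct key $k^*$, so that $f(i) = f'(i,k^*)$ for all $i$. Call a key $k$ \emph{incorrect} if it is not functionally equivalent to $k^*$, and let $U = \{k \in \mathbb{B}^l : \exists\, i,\ f'(i,k) \neq f(i)\}$ be the universe of incorrect keys. To each input pattern $i \in \mathbb{B}^n$ I associate the set $S_i = \{k \in U : f'(i,k) \neq f(i)\}$, i.e., the incorrect keys that input $i$ prunes once its oracle response $f(i)$ is added to the SAT formula. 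A SAT attack terminates precisely when every incorrect key has been pruned, so a sequence of queries unlocks the circuit if and only if the corresponding sets $S_i$ cover $U$; hence the minimum number of queries that suffices is the optimum of the set-cover instance $(U, \{S_i\})$.

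The first step is to verify that this covering optimum is actually attainable by a legal attack, i.e., that it equals the minimum number of \emph{DIPs} rather than of arbitrary queries. Here I would use two observations. First, any input $i$ that prunes some still-surviving incorrect key $k$ is automatically a valid DIP, since $f'(i,k) \neq f(i) = f'(i,k^*)$ shows that $i$ distinguishes the two surviving keys $k$ and $k^*$; thus ``making pruning progress'' and ``being a DIP'' coincide as long as some incorrect key remains. Second, a minimum cover is irredundant, so each chosen set $S_i$ contains a private element belonging to no other chosen set; issuing the inputs of a minimum cover in any order therefore keeps each of them a genuine DIP at the moment it is queried, because its private key is still unpruned. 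Together these establish that the minimum number of DIPs equals the set-cover optimum of $(U,\{S_i\})$, giving the forward reduction claimed in the statement.

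To turn the correspondence into a hardness result, I would reverse the construction: starting from an arbitrary min-set-cover instance with universe $\{u_1,\dots,u_m\}$ and sets $T_1,\dots,T_p$, I encode universe elements as incorrect keys and the $T_j$ as input patterns, defining $f'(i_j,k_r)$ to disagree with the designated output exactly when $u_r \in T_j$, and reserving one correct key $k^*$ that agrees everywhere. The minimum number of DIPs for this $f'$ then equals the minimum set cover, so the NP-hardness of min-set-cover~\cite{korte2012combinatorial} transfers. The main obstacle I anticipate is precisely this encoding step: I must guarantee that the synthesized $f'$ is a well-defined Boolean function with a genuinely unique-up-to-equivalence correct key, that each input pattern is realizable with $\lceil \log_2 p\rceil$ primary-input bits and each incorrect key with $\lceil \log_2 m\rceil$ key bits so the reduction stays polynomial, and, most delicately, that no spurious distinguishing relations among the incorrect keys themselves let the attack terminate below the cover optimum. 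Controlling these incidental distinguishing pairs, rather than the core covering argument, is where the technical care will be required.
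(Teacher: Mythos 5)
Your first two paragraphs reproduce the paper's proof almost verbatim: the paper defines the universe $U$ of incorrect keys, the sets $S_i = \{k \,|\, f(i)\neq f'(i,k)\}$ of keys pruned when $i$ is used as a DIP, and identifies the minimum number of DIPs with the optimum of the set-cover instance $(U, \{S_i\})$ --- and it stops there. Your attainability check is a welcome refinement the paper skips silently: the observation that any input pruning a surviving incorrect key $k$ is automatically a DIP (since it distinguishes $k$ from the never-pruned $k^*$), and that an irredundant minimum cover has a private element in each chosen set, so the inputs of an optimal cover remain genuine DIPs in whatever order they are issued. Your third paragraph, however, goes beyond both the statement and the paper. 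The theorem only asserts the forward reduction (min-DIP reduces to min-set-cover, the latter being NP-hard), which is precisely what licenses the paper's subsequent use of a greedy set-cover heuristic to estimate $t_{SAT}$; it does not assert NP-hardness of the min-DIP problem itself. The reverse reduction you sketch is what one would need for that stronger claim --- and, indeed, the paper's own follow-up remark that ``finding the minimum number of DIPs is, in general, a hard problem'' is not actually justified by the forward direction alone --- but you correctly flag the obstacles (spurious distinguishing pairs among encoded keys, keeping the instance polynomial in $n$ and $l$) and leave that construction incomplete. Since the statement as written requires only the forward direction, your proof is complete for what is claimed; the unfinished converse is a bonus observation about the paper's logic, not a gap in your argument.
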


\begin{proof}
Given 
% an instance of ``finding the minimum number of DIP'' problem, i.e., 
the error table associated with an encrypted Boolean function $f'$, 
% a universal set 
let $U$ be the set of all the incorrect keys, i.e.,
$$U = \{ k | \exists \ i \in \mathbb{B}^{|I|}, f(i)\neq f'(i,k)\}.$$
For each input pattern $i$, let $S_i$ be the set of the incorrect keys that can be eliminated by a SAT-attack iteration using $i$ as a DIP, i.e., 
$$S_i = \{k| f(i)\neq f'(i,k)\}.$$
Finally, let $\mathcal{S}$ the collection of all the sets corresponding to an input pattern, i.e., $\mathcal{S} = \{ S_i | i \in \mathbb{B}^{|I|} \}$. 
Finding the minimum number of DIPs that are enough to prune out all the incorrect keys can then be reduced to finding the minimum number of sets from $\mathcal{S}$ whose union equals $U$, which is a min-set-cover problem. 
% \pierluigi{Double check spelling of the problem.}
\end{proof}

Theorem~\ref{theorem:min_cover} shows that finding the minimum number of DIPs is, in general, a hard problem. We can, however, use greedy algorithms in order to emulate worst-case SAT attacks and provide approximate but conservative estimates for their duration, by  searching and prioritizing the input patterns that can eliminate the largest number of incorrect keys. 
Fig.~\ref{fig:min_dip} shows the largest number of DIPs over all possible values for $h$ returned by the greedy algorithm with different key sizes from 1 to 17. By definition, $t_{SAT}$ should be less than or equal to the results in Fig.~\ref{fig:min_dip}, which exhibits a sub-exponential behavior. 
% $t_{SAT}$ exhibits a sub-exponential behavior on a set of case studies with key sizes from 1 to 17. 
%
% Finally, In Tab.~\ref{tab:security_model_example}, the removal model is adopted directly from~\cite{Yasin2017Provably-Secure} with slight adaptation where we use error rate instead of error count. 
% \pierluigi{I think we came up with a better model?} 
% Proofs for all the expressions can be found in~\cite{}.  
%
% In order to show how little $t_{SAT}$ is for \texttt{SFLL}, we compute the error tables for \texttt{SFLL}-HD with the key size ranging from 1 to 17 and run  
% For each key size, we simulate SAT attacks with all possible $h$ values. For each key size, we picked the largest number of DIPs in the result and plot. 
% This shows that $t_{SAT}$ is far from exponential with respect to the key size but more of a linear trend. In Sec.~\ref{sec:experiment}, we will show how close our $t_{SAT}$ is closer to the real experiment results. 
\begin{figure}[t]
	\centering
	% \vspace{-10pt}
	\includegraphics[width=1\columnwidth]{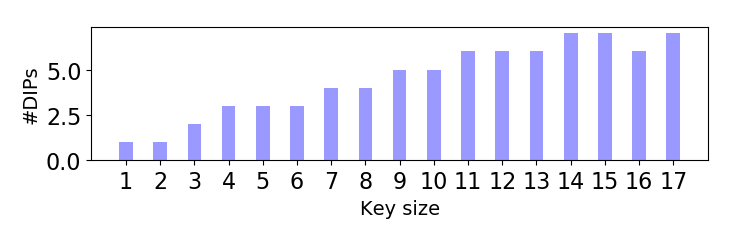}
    \vspace{-20pt}
	\caption{The largest \#DIPs over all possible values for $h$ with different key size returned by the greedy algorithm. }
	\vspace{-15pt}
	\label{fig:min_dip}
\end{figure}
Both the expressions of $E_{FC}$ and $E_{APP}$ shown in Tab.~\ref{tab:chance_cons_formu} can be derived as stated by the following theorems.
\begin{theorem}
For a circuit encrypted with \texttt{SFLL-HD} with Hamming distance parameter $h$, let $l$ and $n$ be the key size and the primary input size, respectively, and $E_{FC}$ the functional corruptibility. We obtain $$E_{FC} = \frac{\binom{l}{h}\left[2^{l} - \binom{l}{h}\right]}{2^{2l-1}}.$$ 
\end{theorem}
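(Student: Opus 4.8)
The plan is to count the corrupted entries of the error table directly and then normalize by $2^{n+l}$. Since \texttt{SFLL} fits the general model with $g(i,k)=Strip(i)\oplus Res(i,k)$, an output is corrupted exactly when the flip signal equals one, so that $E_{FC}=\frac{1}{2^{n+l}}\sum_{i\in\mathbb{B}^n}\sum_{k\in\mathbb{B}^l}\mathbbm{1}(g(i,k)=1)$. For \texttt{SFLL-HD}, writing $x$ for the $l$-bit slice of $i$ fed to the comparator and $k^*$ for the correct key, we have $Strip(i)=\mathbbm{1}(\mathrm{HD}(x,k^*)=h)$ and $Res(i,k)=\mathbbm{1}(\mathrm{HD}(x,k)=h)$, where $\mathrm{HD}$ denotes the Hamming distance. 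Because $g$ depends on $i$ only through $x$, I would first sum out the $n-l$ irrelevant input bits, each contributing a factor $2^{n-l}$, reducing the problem to a count over $\mathbb{B}^l\times\mathbb{B}^l$:
\begin{equation*}
E_{FC}=\frac{1}{2^{2l}}\sum_{x\in\mathbb{B}^l}\sum_{k\in\mathbb{B}^l}\left[\mathbbm{1}\!\left(\mathrm{HD}(x,k^*)=h\right)\oplus\mathbbm{1}\!\left(\mathrm{HD}(x,k)=h\right)\right].
\end{equation*}

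Next I would linearize the XOR through the identity $a\oplus b=a+b-2ab$ for $a,b\in\{0,1\}$, splitting the double sum into three terms. The first two are immediate: summing $\mathbbm{1}(\mathrm{HD}(x,k^*)=h)$ over $x$ gives $\binom{l}{h}$, the size of the Hamming sphere of radius $h$, and the free sum over $k$ contributes a factor $2^l$, so each of these terms equals $2^l\binom{l}{h}$.

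The crux is the cross term $\sum_{x,k}\mathbbm{1}(\mathrm{HD}(x,k^*)=h)\,\mathbbm{1}(\mathrm{HD}(x,k)=h)$, which I would evaluate using translation invariance of the Hamming metric: the first indicator restricts $x$ to the $\binom{l}{h}$ points at distance $h$ from $k^*$, and for each such $x$ the number of keys $k$ at distance exactly $h$ is again $\binom{l}{h}$, independent of $x$. Hence the cross term equals $\binom{l}{h}^2$. Combining the three terms, the double sum evaluates to $2\binom{l}{h}\!\left(2^l-\binom{l}{h}\right)$, and dividing by $2^{2l}$ yields $E_{FC}=\binom{l}{h}\!\left[2^l-\binom{l}{h}\right]/2^{2l-1}$, as claimed.

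I expect the main obstacle to be precisely this cross term: the temptation is to treat it as a genuine convolution of two Hamming spheres, whereas the key realization that collapses it to $\binom{l}{h}^2$ is that, once both $x$ and $k$ range over all of $\mathbb{B}^l$, the count of admissible $k$ for a fixed admissible $x$ is center-independent. As consistency checks I would note that the correct-key column contributes zero (since $g(i,k^*)=Strip(i)\oplus Strip(i)=0$), which the linearized count handles automatically, and that the final expression is independent of the specific value of $k^*$.
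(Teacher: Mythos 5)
Your proof is correct and follows essentially the same route as the paper: both count the corrupted entries of the error table directly, using the fact that a Hamming sphere of radius $h$ in $\mathbb{B}^l$ contains $\binom{l}{h}$ points regardless of its center, and then normalize by $2^{n+l}$. The only difference is bookkeeping --- you linearize the XOR via $a\oplus b=a+b-2ab$ and evaluate three global sums, whereas the paper splits the inputs into protected ones (each corrupted by $2^l-\binom{l}{h}$ keys) and unprotected ones (each corrupted by $\binom{l}{h}$ keys); both organizations produce the identical count $2^{n-l}\cdot 2\binom{l}{h}\bigl[2^l-\binom{l}{h}\bigr]$ of corrupted entries.
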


\begin{proof}
%
% As reported by Yasin~\textit{et~al.}~\cite{Yasin2017Provably-Secure}, the total number of protected input patterns and unprotected input patterns can be computed as  $2^{n-l}\cdot\binom{l}{h}$ and $2^{n-l}\cdot\left[ 2^{l}-\binom{l}{h}\right]$, respectively. 
For the circuit in  Figure~\ref{fig:general_locking_decomposition}c, the output is corrupted if and only if $ \mathbbm{1}(HD(i, k^*) = h)\oplus \mathbbm{1}(HD(i, k) = h) = 1$. If $i$ is a protected input pattern, then we have $HD(i, k^*) = h$, while $HD(i, k)$ must be different from $h$ in order to provide a corrupted output. Therefore, the number of key patterns generating an incorrect output for each protected input pattern is $2^{l} - \binom{l}{h}$. Similarly, we can derive the number of key patterns generating an incorrect output for each unprotected input pattern as $\binom{l}{h}$. The total number of protected input patterns and unprotected input patterns can be computed as  $2^{n-l}\binom{l}{h}$ and $2^{n-l}\left[ 2^{l}-\binom{l}{h}\right]$, respectively~\cite{Yasin2017Provably-Secure}.
% For each protected and each unprotected input pattern, the number of key patterns that provide a corrupted output, which means $ \mathbbm{1}(HD(i, k^*) = h)\oplus \mathbbm{1}(HD(i, k) = h) = 1$,} is $2^{l} - \binom{l}{h}$ and $\binom{l}{h}$, respectively. 
% \pierluigi{Check my edits.}
By summing up all the incorrect output values over all the input and key patterns, we obtain the following expression for the functional corruptibility: 
\begin{equation}
    \begin{aligned}
    E_{FC} &= \frac{
    2^{n-l}
    \left\{\binom{l}{h}\left[2^{l} - \binom{l}{h}\right]+ \left[2^{l} - \binom{l}{h}\right]\binom{l}{h}\right\}
    }
    {2^{n}\cdot 2^{l}}\\
    &=\frac{\binom{l}{h}\left[2^{l} - \binom{l}{h}\right]+ \left[2^{l} - \binom{l}{h}\right]\binom{l}{h}}{2^{2l}}\\
    &=\frac{\binom{l}{h}\left[2^{l} - \binom{l}{h}\right]}{2^{2l-1}} \notag
    \end{aligned}
\end{equation}
\end{proof}

\begin{theorem}
For a circuit encrypted with \texttt{SFLL-HD} with Hamming distance parameter $h$, let $l$ and $n$ be the key size and the primary input size, respectively, and $E_{APP}$ the approximate SAT-attack resilience. We obtain $$E_{APP} = \frac{2\left[\binom{l}{h}-2\binom{l-2}{h-1}\right]}{2^{n}}.$$ 
\end{theorem}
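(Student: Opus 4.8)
The plan is to express the per-key error count combinatorially and then minimize it over all incorrect keys. First I would fix an incorrect key $k \neq k^{*}$ and observe, from the \texttt{SFLL-HD} flip signal $g(i,k)=Strip(i)\oplus Res(i,k)$ with $Strip(i)=\mathbbm{1}(HD(i,k^{*})=h)$ and $Res(i,k)=\mathbbm{1}(HD(i,k)=h)$, that the output is corrupted exactly on those inputs where precisely one of the two indicators equals $1$. Writing $A_{k^{*}}=\{x:HD(x,k^{*})=h\}$ and $A_{k}=\{x:HD(x,k)=h\}$ for the two Hamming spheres of radius $h$, the corrupted inputs form the symmetric difference $A_{k^{*}}\triangle A_{k}$, so that $\epsilon_{k}=2^{\,n-l}\,|A_{k^{*}}\triangle A_{k}|$, where the factor $2^{\,n-l}$ accounts for the input bits outside the comparator. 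Since $|A_{k^{*}}|=|A_{k}|=\binom{l}{h}$, I would use $|A_{k^{*}}\triangle A_{k}|=2\binom{l}{h}-2\,|A_{k^{*}}\cap A_{k}|$ to turn the minimization of $\epsilon_{k}$ into a \emph{maximization} of the sphere intersection.

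The second step is to compute $|A_{k^{*}}\cap A_{k}|$ as a function of the only relevant quantity, the distance $d=HD(k,k^{*})$. Placing $k^{*}$ at the origin and $k$ with its $d$ ones in a fixed block, a point $x$ lies in both spheres iff it has $a$ ones inside that block and $b$ ones outside with $a+b=h$ and $(d-a)+b=h$; subtracting forces $a=d/2$. Hence the intersection is empty for odd $d$ and equals $\binom{d}{d/2}\binom{l-d}{\,h-d/2\,}$ for even $d$. Minimizing $\epsilon_{k}$ therefore reduces to maximizing this expression over even $d\in\{2,\dots,l\}$, the value $d=0$ being excluded since $k\neq k^{*}$.

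The final step is the optimization. I expect the maximizer to be $d=2$, which gives $|A_{k^{*}}\cap A_{k}|=\binom{2}{1}\binom{l-2}{h-1}=2\binom{l-2}{h-1}$, hence $\min_{k}|A_{k^{*}}\triangle A_{k}|=2\bigl[\binom{l}{h}-2\binom{l-2}{h-1}\bigr]$; dividing the resulting minimal $\epsilon_{k}$ by the input count then yields the claimed $E_{APP}=\frac{2[\binom{l}{h}-2\binom{l-2}{h-1}]}{2^{n}}$.

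The hard part will be justifying that $d=2$ is indeed the maximizer, i.e. that $\binom{d}{d/2}\binom{l-d}{h-d/2}$ attains its maximum at the smallest even $d$. I would argue this by forming the ratio of consecutive even-$d$ terms and showing it stays below $1$ in the regime of interest (small $h$, the regime in which \texttt{SFLL-HD} is actually deployed for low corruptibility). I should be careful, though, about the boundary case $h=l/2$: there the complement key $k=\overline{k^{*}}$ (with $d=l$) satisfies $A_{k^{*}}=A_{k}$, giving zero error and breaking the $d=2$ optimum, so I would note that the stated formula applies outside this degenerate regime. I would also reconcile the denominator with the convention $\epsilon_{k}=2^{\,n-l}|A_{k^{*}}\triangle A_{k}|$, which matches the quoted $2^{n}$ precisely when the comparator spans the full protected cone ($l=n$), as in the $n=l=3$ error table of Tab.~\ref{tab:truth_table_example}.
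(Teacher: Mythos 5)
Your proposal follows essentially the same route as the paper's proof: both parameterize an incorrect key by its Hamming distance $d$ (the paper's $x$) from $k^*$, reduce the per-key error count to the size of the intersection of the two Hamming shells, namely $\binom{d}{d/2}\binom{l-d}{h-d/2}$ for even $d$ and zero for odd $d$, and obtain the stated formula by taking the extremum at $d=2$. Your write-up is in fact more careful than the paper's, which likewise asserts the $x=2$ optimum without justification, silently drops the $2^{n-l}$ multiplicity that its own $E_{FC}$ proof retains (so its $2^n$ denominator also implicitly presumes $l=n$), and does not flag the degenerate case $h=l/2$, where the complement key yields identical functionality, $E_{APP}=0$, and the stated formula fails.
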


\begin{proof}
We denote by $HD(a, b)$ the Hamming distance between the words (bit strings) $a$ and $b$ and by $Dif(a,b)$ the set of indexes marking the bits that are different in $a$ and $b$. We suppose that the circuit output is corrupted (flipped) for input $i$ and key $k$, and let the HD between $k^*$ (the correct key) and $k$ be $HD(k^*, k) = x$, with $x \in \left[1,l\right], x\in \mathbb{N}$. Then, by recalling the architecture in  Figure~\ref{fig:general_locking_decomposition}c, there can only be an odd number of output inversions and the following equation holds: 
\begin{equation}\notag
    \mathbbm{1}(HD(i, k^*) = h)\oplus \mathbbm{1}(HD(i, k) = h) = 1,
\end{equation}
meaning that one and only one of the two HDs is $h$.  

We first assume that $HD(i, k^*)=h$ and $Dif(i, k^*) \cap Dif(k^*, k) = \emptyset$, i.e., $i$ and $k$ differ from $k^*$ on disjoint sets of indexes. We then conclude that 
$$HD(i, k) = h + x.$$
More generally, if the cardinality of the set $Dif(i, k^*) \cap Dif(k^*, k)$ is  $y$, we obtain 
$$HD(i, k) = h + x-2y.$$
We then consider the following two cases. If 
$x$ is an odd number, then $HD(i, k)$ cannot be equal to $h$ for any $y$, which provides a number of  corrupted outputs equal to $\binom{l}{h}$. On the other hand, if $x$ is even, then $HD(i, k)$ will evaluate to $h$ whenever $x = 2y$. 
Since the number of possible $k$ values satisfying this condition is $\binom{l-x}{h-x/2}\cdot \binom{x}{x/2}$, we obtain a total number of corrupted output values equal to  $\binom{l}{h}-\binom{l-x}{h-x/2}\cdot \binom{x}{x/2}$. Since we are interested in the minimum number of these two cases, we choose $\binom{l}{h}-\binom{l-x}{h-x/2}\cdot \binom{x}{x/2}$ as the minimum number of corrupted outputs when $HD(i, k^*)=h$.

% \begin{itemize}
%     \item \textbf{$x$ is an odd number.} $HD(i, k)$ will never be equal to $h$ under any $y$, thus the corrupted output under this category amounts to $\binom{l}{h}$. 
%     \item \textbf{$x$ is an even number.} $HD(i, k)$ will be $h$ when $x=2y$ and the number of possible $k$ that satisfies this condition is $\binom{l-x}{h-x/2}\cdot \binom{x}{x/2}$. Therefore, the number of corrupted output is $\binom{l}{h}-\binom{l-x}{h-x/2}\cdot \binom{x}{x/2}$. 
% \end{itemize}

Similar considerations can be applied for the case when $HD(i, k)=h$, which leads to the same conclusion as above. Therefore, the overall number of corrupted outputs is doubled.
% , i.e. $2\left[\binom{l}{h}-\binom{l-x}{h-x/2}\cdot \binom{x}{x/2}\right]$. 
The approximate SAT resilience is given by the minimum $E_{FC}$, i.e., 
\begin{equation}\label{eq:AppSAT_level_sim}
    E_{APP} = \min{\left\{\frac{2\left[\binom{l}{h}-\binom{l-x}{h-x/2}\cdot \binom{x}{x/2}\right]}{2^{n}}, \forall x \geq 2, x/2\in \mathbb{N} \right\}}, \notag
\end{equation}
which is achieved for $x=2$, finally leading to  
\begin{equation}\label{eq:AppSAT_level_sim2}
    E_{APP} = \frac{2\left[\binom{l}{h}-2\binom{l-2}{h-1}\right]}{2^{n}} \notag
\end{equation}
% \pierluigi{How did you make sure about the minimum?}
% \yinghua{I put the formula into Wolfram and the minimum is obtained when $x = 2$.}
\end{proof}

\fakeparagraph{FLL}
\texttt{FLL} aims at creating high $E_{FC}$ with low overhead by appropriately adding key-gates in the circuit, as shown in Fig.~\ref{fig:general_locking_decomposition}d. While the key-gates are not directly inserted at the primary output, their combined effect can still be represented by an appropriate $g$ function producing the same error pattern. While $E_{FC}$ depends on the specific circuit and cannot be computed in closed form, \texttt{FLL} can achieve higher values than  the other three methods, based on empirical results. 
% \YHrevise{
% Due to the fact that the key-gate insertion locations are determined by the circuit structure and its power to corrupt the circuit depends on the circuit function, we do not have a closed form expression for $E_{FC}$. Empirically, the highest $E_{FC}$ by \texttt{FLL} is higher than that achieved by the other three techniques. 
However, it cannot guarantee exponential $t_{SAT}$ with the key size. Moreover, the  
% is not as high as \texttt{SARLock} or \texttt{DTL}. 
XOR/XNOR-based key-gates may be easy to identify, leading to negligible resilience to removal attacks. 

\begin{figure*}[t]
\vspace{-20pt}
\centering
\subfigure[]{
\centering
\includegraphics[width=0.5\columnwidth]{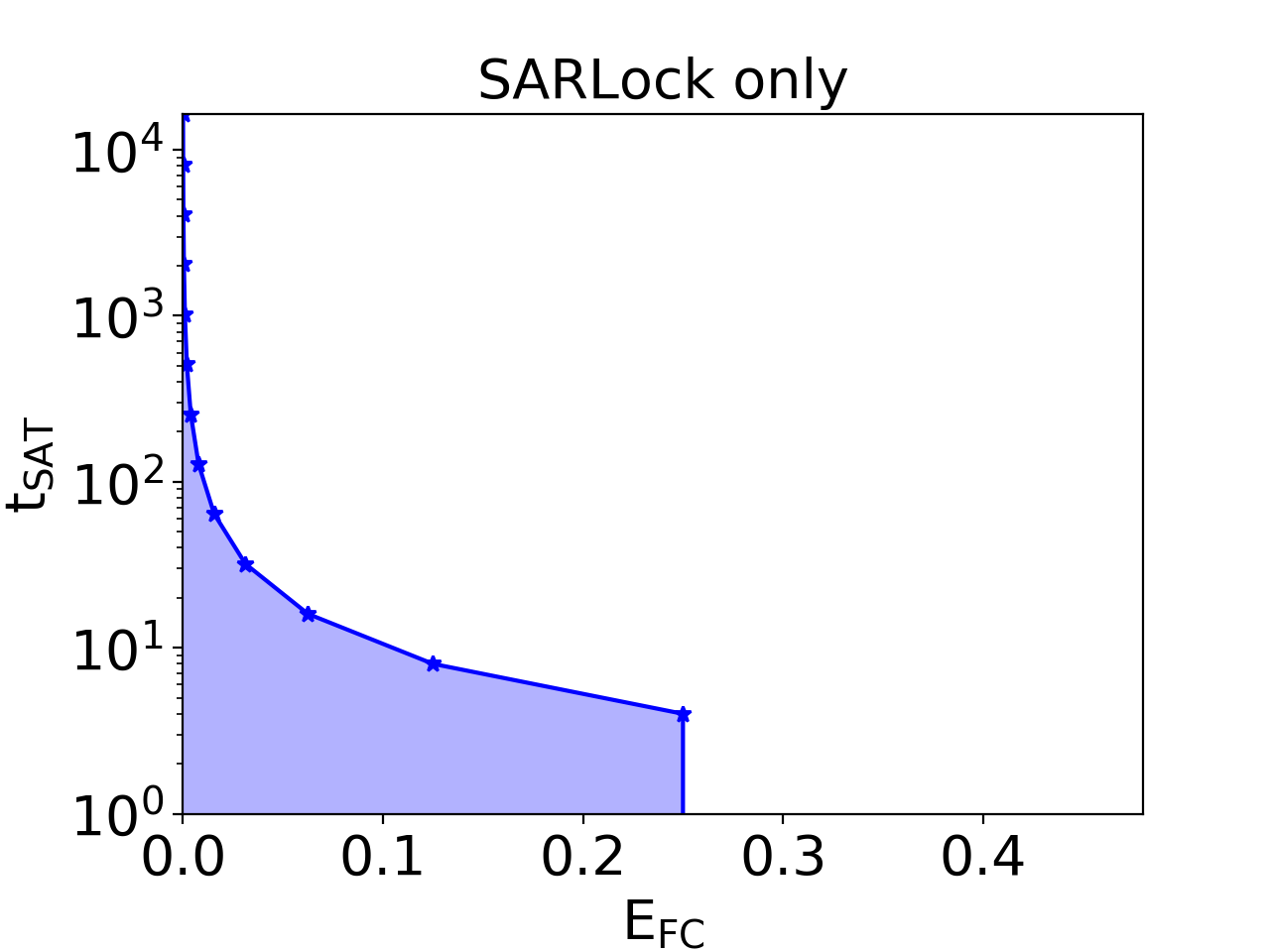}
% \caption{fig1}
}%
\subfigure[]{
\centering
\includegraphics[width=0.5\columnwidth]{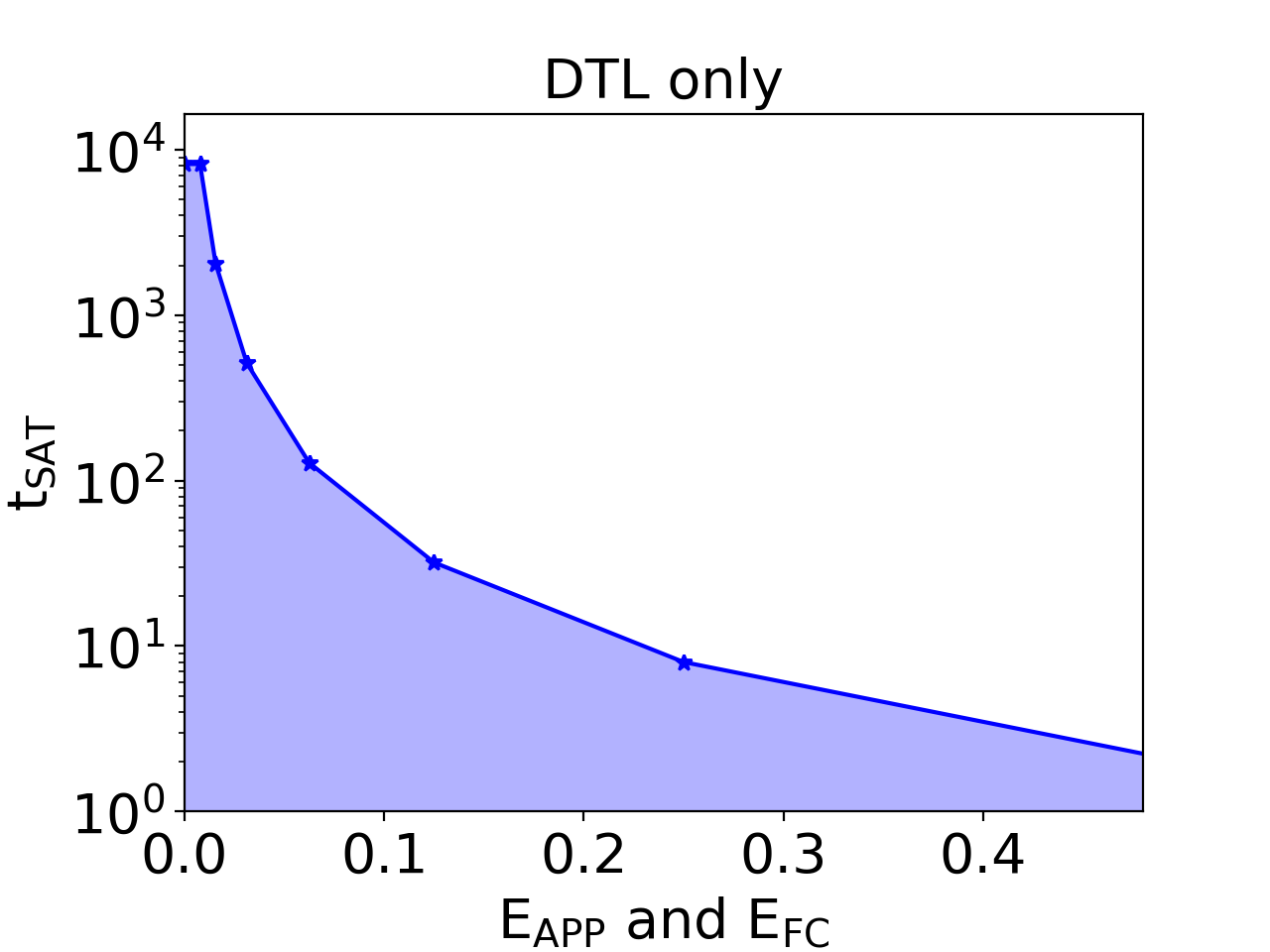}
% \caption{fig2}
}%
\subfigure[]{
\centering
\includegraphics[width=0.5\columnwidth]{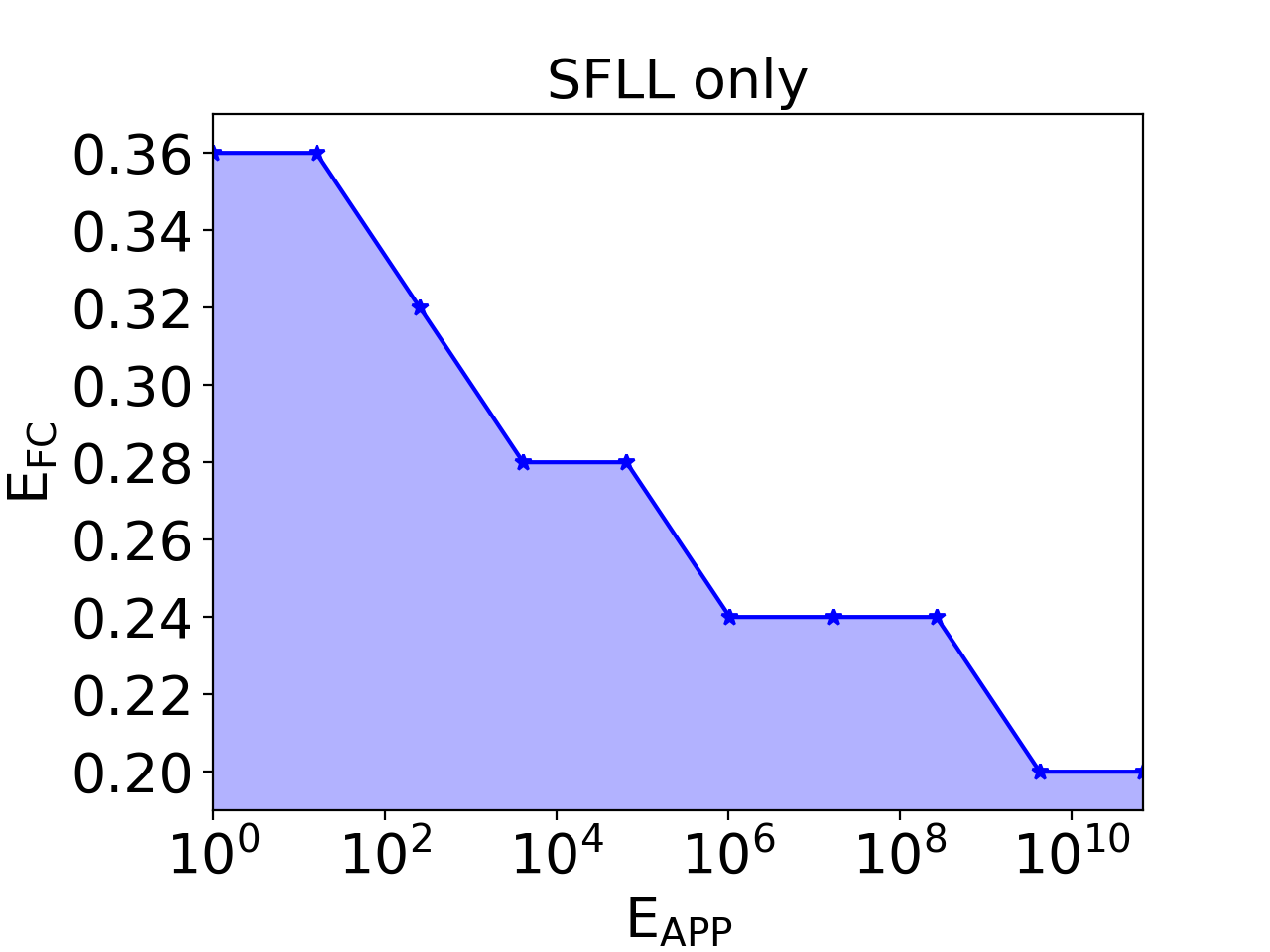}
% \caption{fig2}
}%
\subfigure[]{
\centering
\includegraphics[width=0.5\columnwidth]{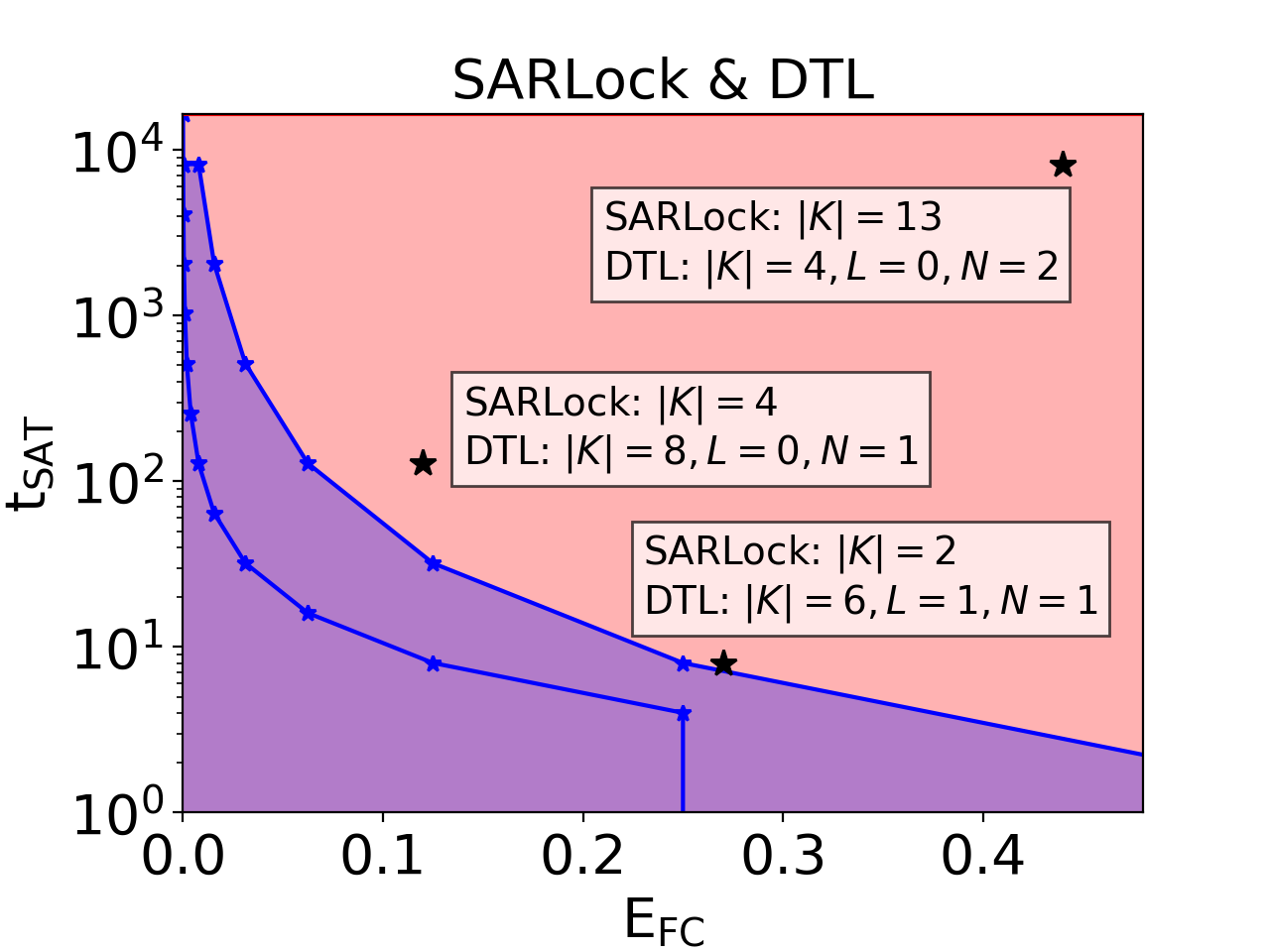}
% \caption{fig2}
}%
\centering
\vspace{-10pt}
\caption{Trade-off analysis of (a) \texttt{SARLock}, (b) \texttt{DTL}, (c) \texttt{SFLL}, and (d) \texttt{SARLock} and \texttt{DTL}.
% \pierluigi{Update figures.}
}
\vspace{-10pt}
\label{fig:trade_off}
\end{figure*}

\begin{figure*}[t]
    % \vspace{-20pt}
	\centering
% 	\vspace{-15pt}
	\subfigure[]{
    \centering
    \includegraphics[width=0.5\columnwidth]{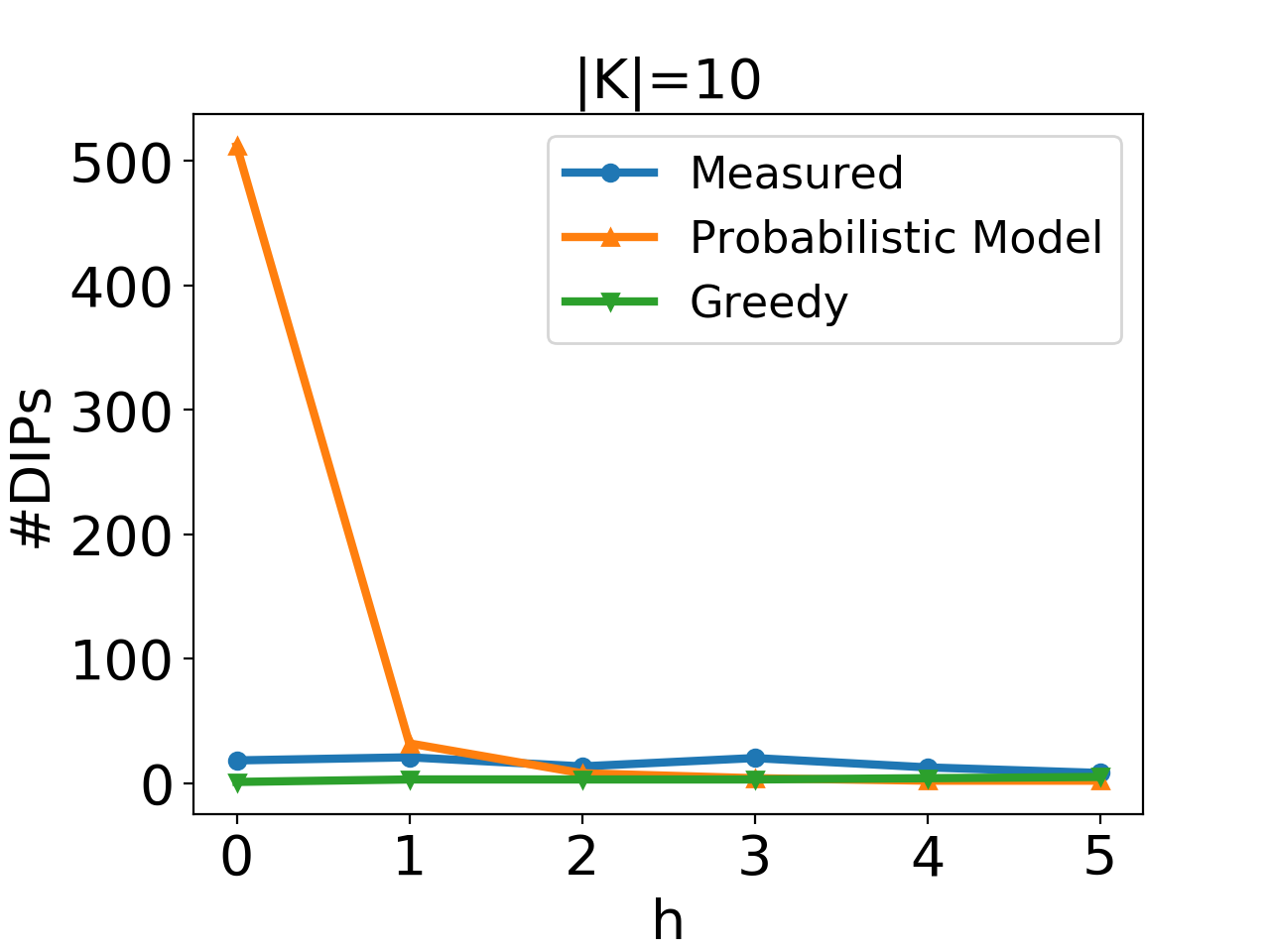}
    % \caption{fig1}
    }%
    \subfigure[]{
    \centering
    \includegraphics[width=0.5\columnwidth]{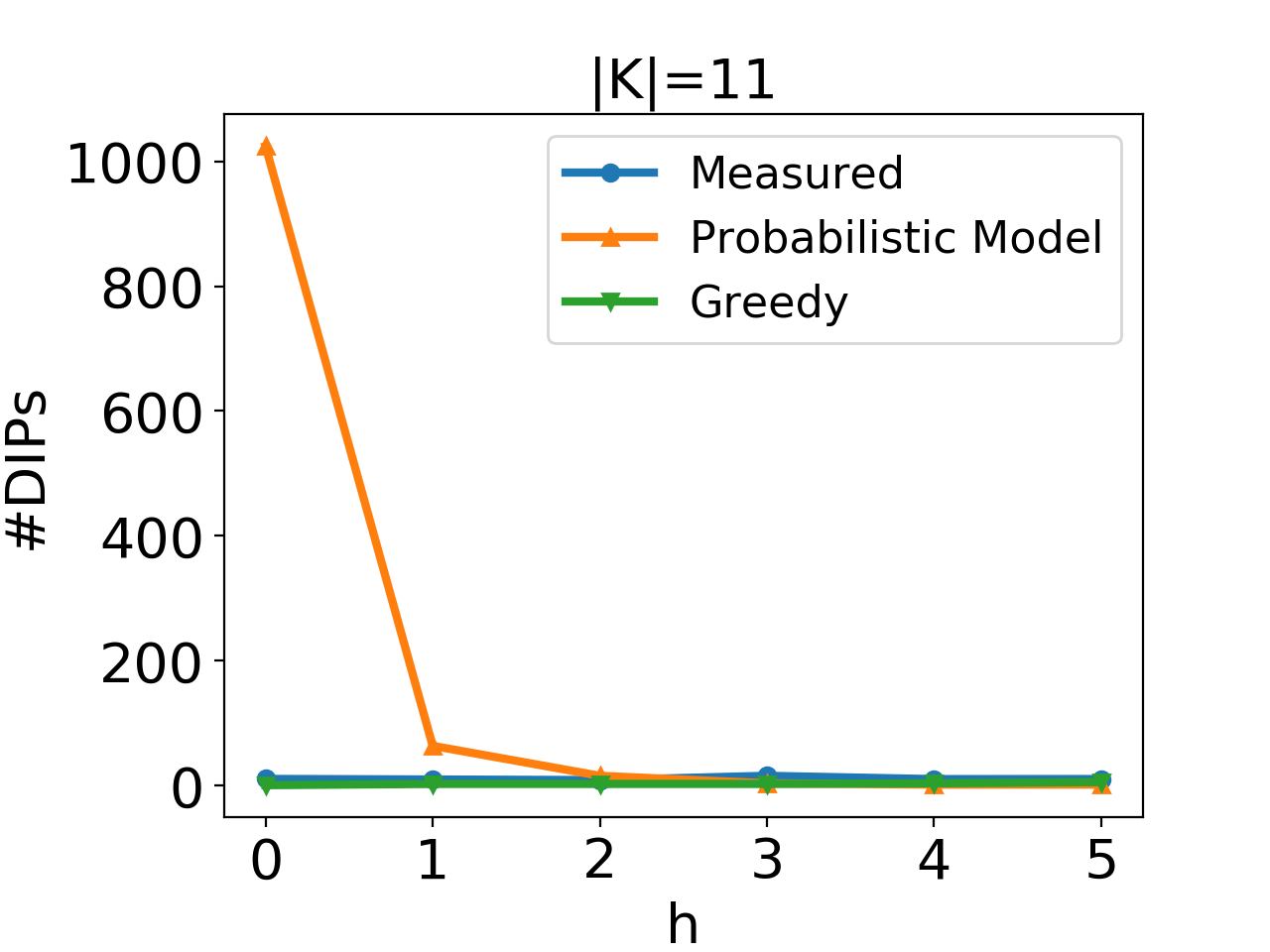}
    % \caption{fig1}
    }%
    \subfigure[]{
    \centering
    \includegraphics[width=0.5\columnwidth]{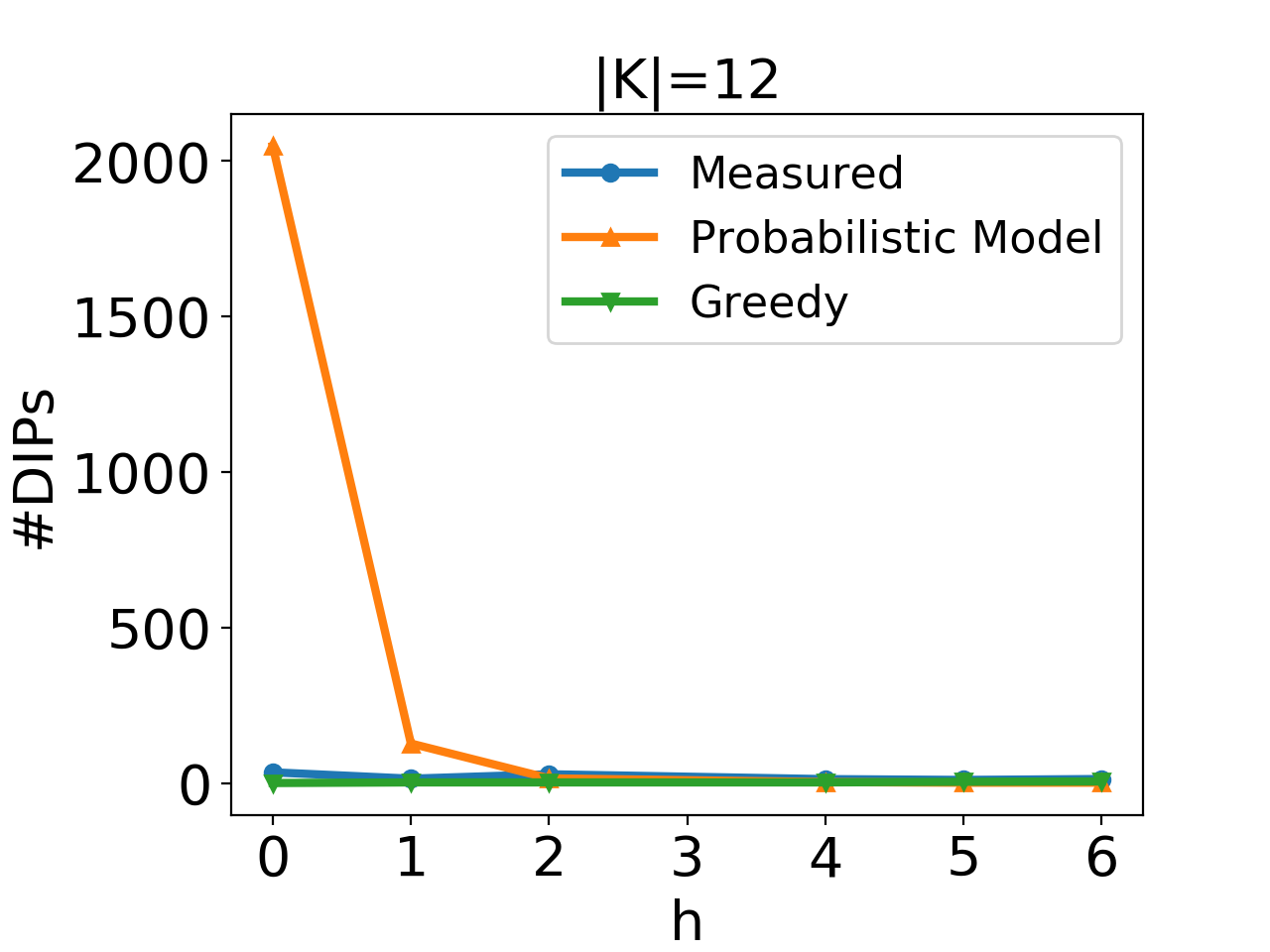}
    % \caption{fig1}
    }%
    \subfigure[]{
    \centering
    \includegraphics[width=0.5\columnwidth]{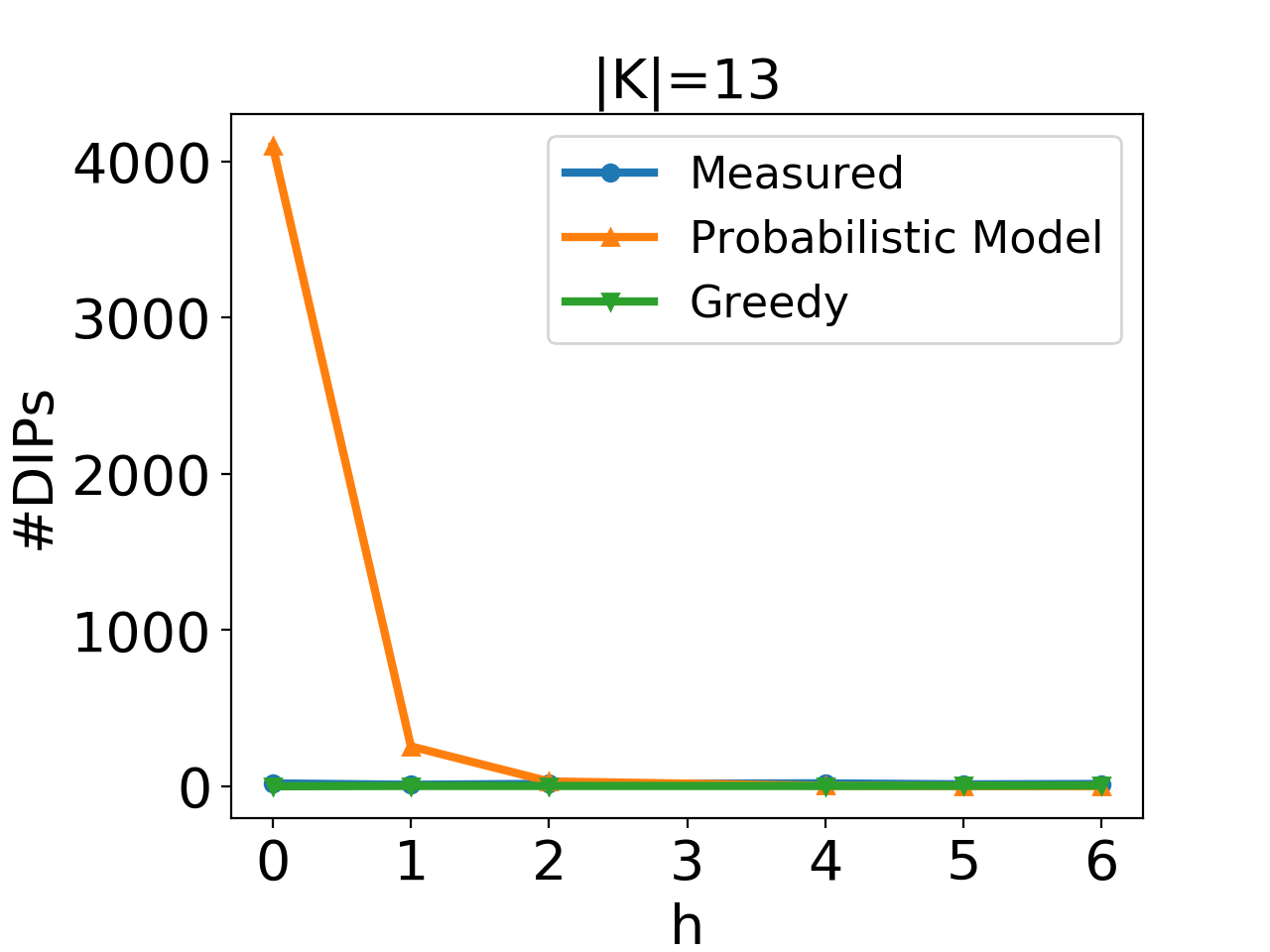}
    % \caption{fig1}
    }%
	\vspace{-10pt}
	\caption{Average \#DIPs on \texttt{SFLL} encrypted circuits when key size is (a) 10, (b) 11, (c) 12, and (d) 13. 
% 	\pierluigi{Hard to read.} 
	}
	\label{fig:sat_attack_sfll}
	\vspace{-10pt}
\end{figure*}

\begin{figure*}[t]
    % \vspace{-20pt}
	\centering
% 	\vspace{-15pt}
	\subfigure[]{
    \centering
    \includegraphics[width=0.5\columnwidth]{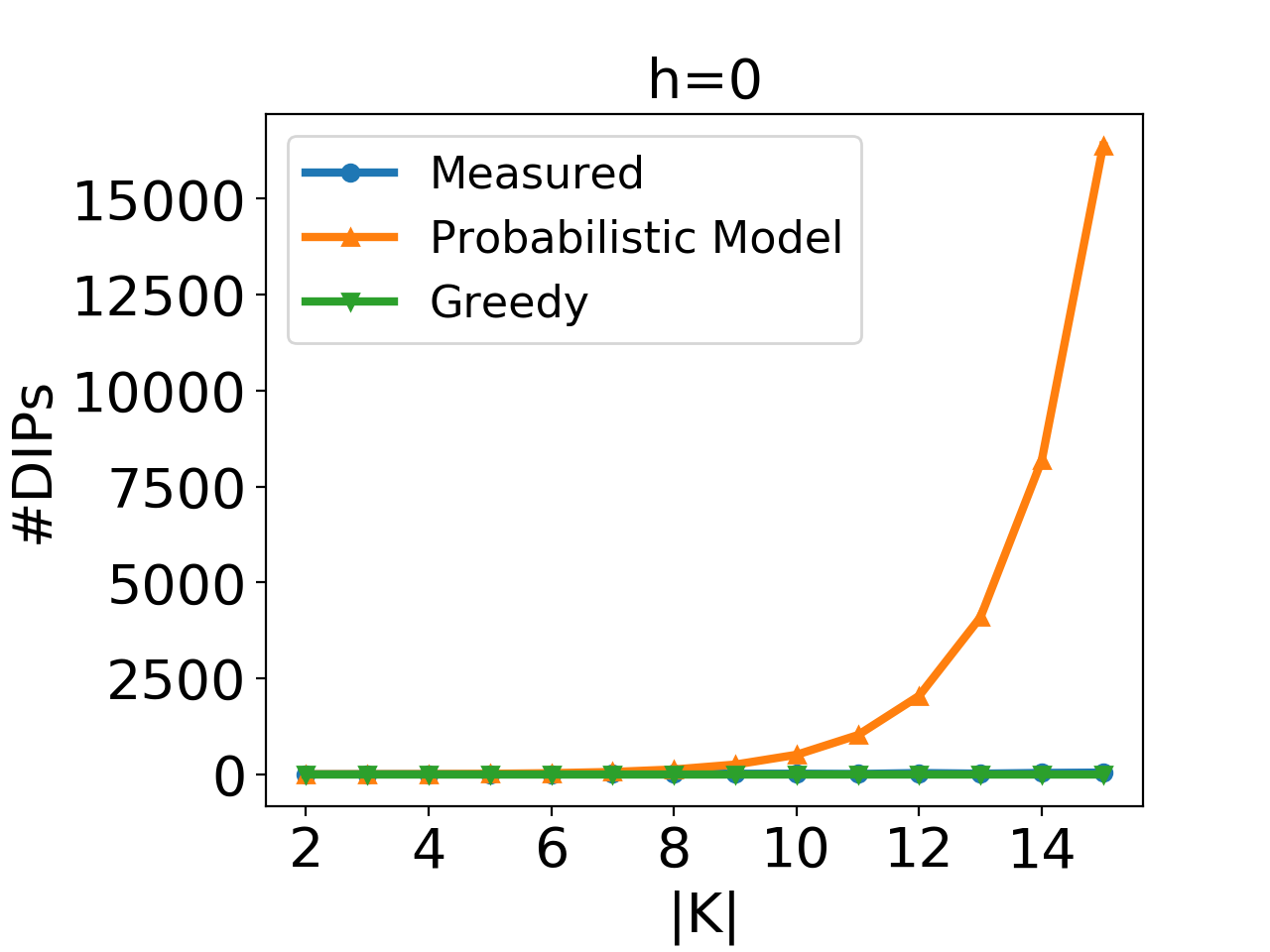}
    % \caption{fig1}
    }%
    \subfigure[]{
    \centering
    \includegraphics[width=0.5\columnwidth]{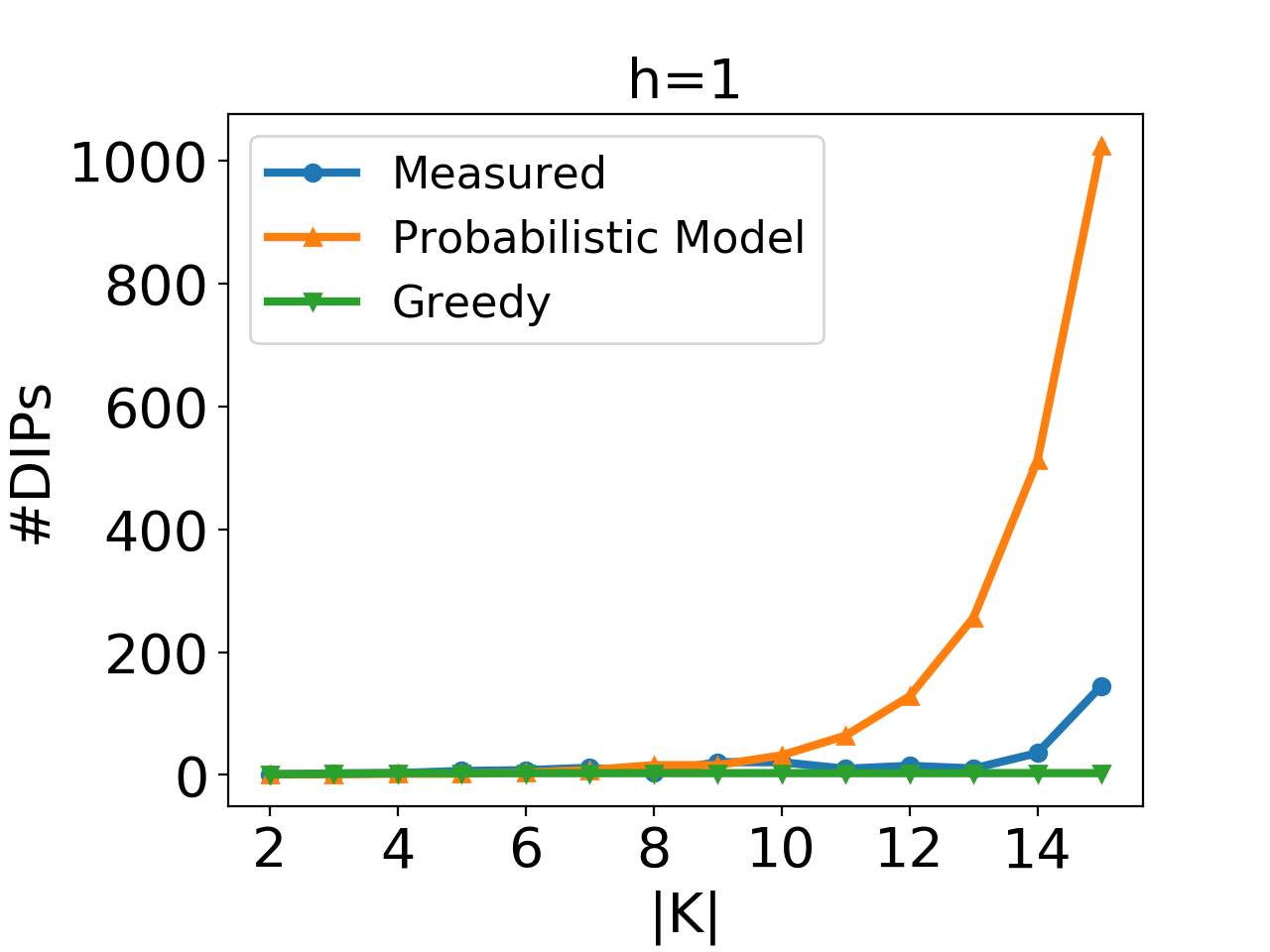}
    % \caption{fig1}
    }%
    \subfigure[]{
    \centering
    \includegraphics[width=0.5\columnwidth]{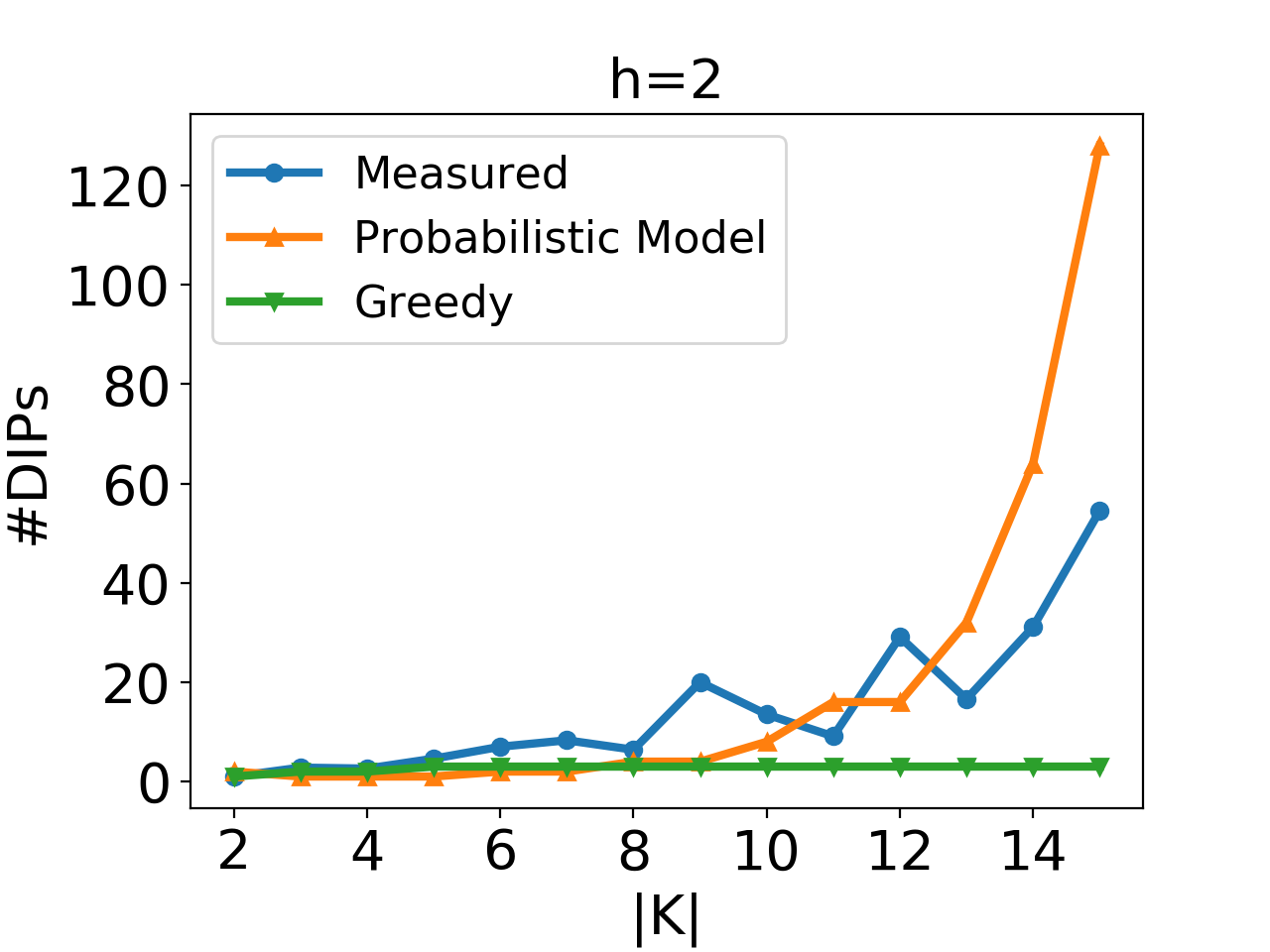}
    % \caption{fig1}
    }%
    \subfigure[]{
    \centering
    \includegraphics[width=0.5\columnwidth]{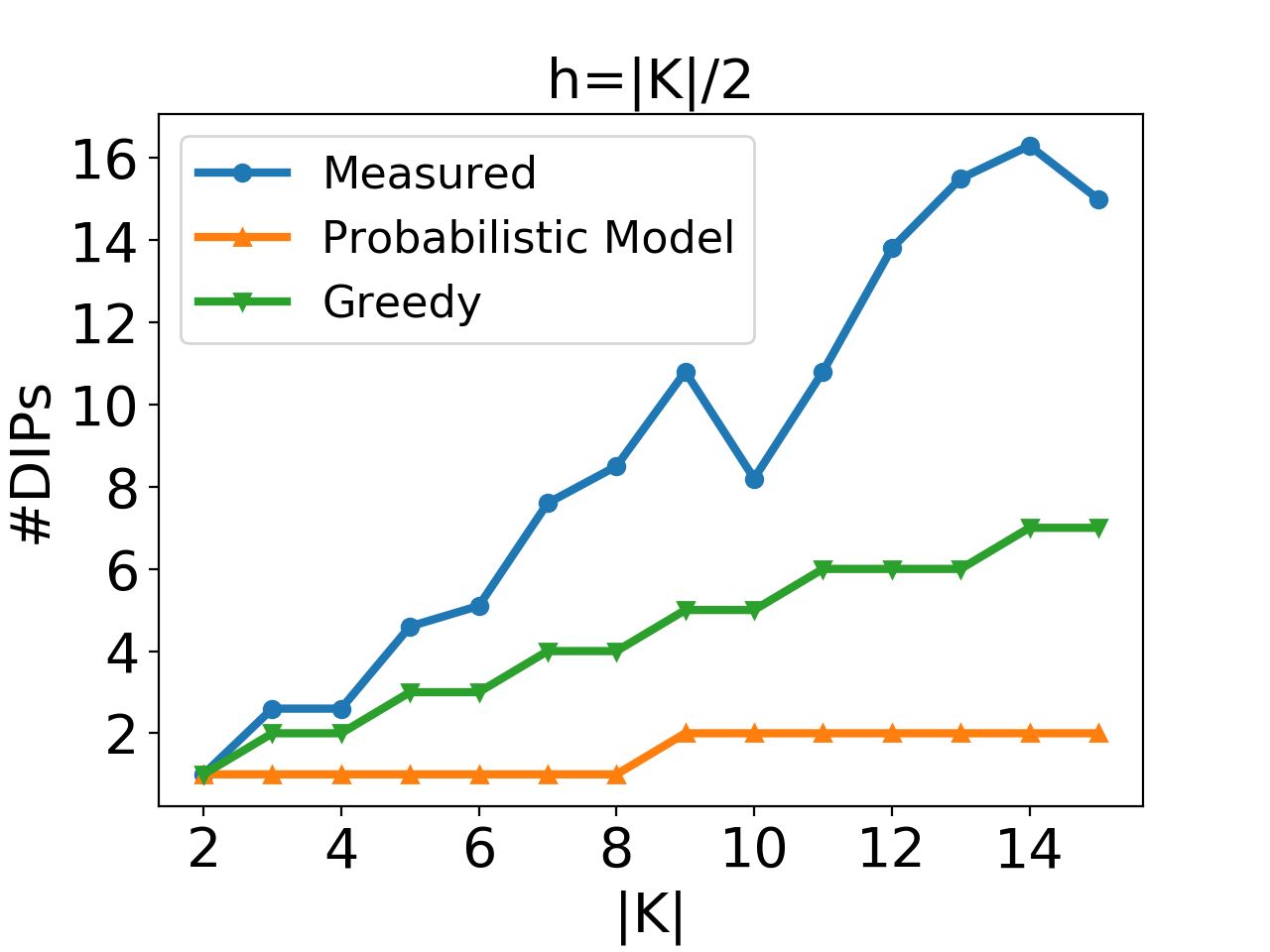}
    % \caption{fig1}
    }%
	\vspace{-10pt}
	\caption{Average \#DIPs on \texttt{SFLL} encrypted circuits when $h=$ (a) 0, (b) 1, (c) 2, and (d) $\frac{|K|}{2}$. 
% 	\pierluigi{Hard to read.} 
	}
	\label{fig:sat_attack_sfll_fix_h}
	\vspace{-10pt}
\end{figure*}

\section{Results and Discussion}\label{sec:experiment}

% CPLEX~\cite{IBM-ILOG-C} is used to solve the ILP problems.
We evaluated models and metrics on 
% by performing simulations and tests on 
% and multiple configuration parameters. 
a $2.9$-GHz Core-i9 processor with $16$-GB memory.  
% four ISCAS benchmark circuit, 
% i.e., C880, C7552, S35932, and S38584.
% For each technique and circuit combination, we sweep all the security levels and obtain $100$ encrypted netlists for verification.
%
We first investigated the effectiveness of our models for fast trade-off evaluation on a set of ISCAS benchmark circuits.  
% between two different security concerns in three specific cases: (1) $E_{FC}$ vs $t_{SAT}$ in \texttt{SARLock}; (2)$E_{APP}$ vs $t_{SAT}$ in \texttt{DTL}; and (3) $E_{APP}$ vs $E_{FC}$ in \texttt{SFLL}. 
% For each case, we give several user specifications and explore to see if a feasible encryption configuration can be found based on our models. The user specifications are in the form of inequalities, such as $E_{FC} \geq 0.35$ and $t_{SAT}\geq 2^{10}$ in the first case. 
The blue areas in Fig.~\ref{fig:trade_off}a-c pictorially represent, as a continuum, the feasible encryption space for different methods and user requirements. 
% shows the results of the trade-off exploration.  indicates at least one feasible encryption configuration can be found which satisfies the user specification, while the while area indicates that the user specification cannot be achieved due to the limitation of the technique. 
% The boundary between the two colors marks the Pareto-optimal curve for the trade-off of this technique. 
%
% In Fig.~\ref{fig:trade_off}a, we observe that, unlike traditional views that $E_{FC}$ is always low, $E_{FC}$ can be as high as $0.25$ with the sacrifice of low $t_{SAT}$ as well as low key size, which makes it easy for attackers to find the correct key. Therefore, combining \texttt{SARLock} with another encryption technique might be a cure to fix this drawback. 
%
For example, Fig.~\ref{fig:trade_off}a shows that a funtional corruptibility ($E_{FC}$) as high as $0.25$ can still be achieved with \texttt{SARLock}; however, it can only be implemented for very low, and therefore impractical, key sizes.   
Fig.~\ref{fig:trade_off}b highlights the trade-off between SAT attack resilience ($t_{SAT}$) and approximate SAT attack resilience ($E_{APP}$) in \texttt{DTL}. 
% While \texttt{DTL} promises exponential approximate SAT resilience, we observe that $E_{APP}$ can become impractically small (or decrease exponentially) for circuits with a large number of PI ports. 
As expected, \texttt{DTL} is able to increase $E_{APP}$ and $E_{FC}$ at the cost of decreasing $t_{SAT}$. The highest possible $E_{FC}$ achieved by \texttt{DTL} is higher than that of \texttt{SARLock} in Fig.~\ref{fig:trade_off}a over the same range of keys. % \pierluigi{But why? Are you using the same keys? I think this comparison may not sound fair.} \yinghua{The figure has been updated since we have better figures when we prepared the slides last week. We used the same key ranges. DTL can achieve much larger functional corruptibility mostly because it can manipulate the AND-tree structure.}
% However, for a fixed $t_{SAT}$, the highest achievable $E_{APP}$ depends also on the number of PI ports $|I|$ of the original circuit. In fact, $E_{APP}$ decreases exponentially with $|I|$.
% \pierluigi{What is the relationship between this insight and the tradeoff. This consideration only relates to one spec.} 
% According to the models of \texttt{DTL}, $E_{APP}$ does not only depend on the encryption configuration, but also depends on the circuit itself. 
% When the circuit has a large number of PI ports, i.e., $|I|$, $E_{APP}$ will decrease exponentially to a very small value. This means that, even if \texttt{DTL} is implemented properly with the goal of achieving high $E_{APP}$, the technique has limited power in raising $E_{APP}$ to a high value and the error rate after successful approximate SAT attacks will be close to zero. 
% However, for those designers who believe a few corrupted output patterns are enough, \texttt{DTL} might still fit their goal. 
% \pierluigi{Why and what is the insight and rationale?} 
Finally, Fig.~\ref{fig:trade_off}c exposes a trade-off between $E_{FC}$ and $E_{APP}$ in \texttt{SFLL}. It shows that increasing $E_{FC}$ adversely impacts $E_{APP}$, possibly due to the fact that, as $E_{FC}$ increases, the error distribution is not uniform; while the peak error rate increases, the error can become significantly low for some of the incorrect keys. 
% \pierluigi{Starting to think that our FC definition may not be the best?}
% Although $E_{APP}$ is defined as the minimum functional corruptibility under any wrong key, when $E_{FC}$ increases, $E_{APP}$ will decrease. One possible explanation to this trade-off is, when $E_{FC}$ increases and more errors are placed in the error table, the distribution of the erros is not uniform. Therefore, more errors are likely to be grouped on some columns, while other columns have low error rates. 
% \pierluigi{I cannot see why though?} 

We further implemented all the encryption configurations explored in Fig.~\ref{fig:trade_off}a-c on four ISCAS benchmark circuits, generating $1473$ netlists
% \pierluigi{how many netlists?} 
in $15$ minutes, to compare the model predictions with the measurements. We used open-source libraries to simulate  
% We verified $E_{FC}$, $t_{SAT}$, and $E_{APP}$ with both our tool and the 
SAT attacks~\cite{Subramanyan2015Evaluating-the-} and report the actual value of $t_{SAT}$. We empirically estimated $E_{FC}$ by averaging the functional corruptibility over  $500$ logic simulations on the encrypted netlists. By using a similar procedure, an empirical estimate for $E_{APP}$ was obtained by taking the average over $500$ logic simulations for each key pattern, and then the minimum corruptibility value over $100$ incorrect key patterns. 
%
% we use our own Python tool and apply $500$ logic simulations to get the average functional corruptibility. 
% For calculating $E_{APP}$, we use the same logic simulation tool aforementioned. We randomly select $100$ wrong key patterns and apply $500$ logic simulations for each key pattern. The minimum of the $100$ average functional corruptibility will be the calculated $E_{APP}$. 
%
Fig.~\ref{fig:security_satisfy} reports the results for 
% shows the accuracy of our models on 
four ISCAS benchmark circuits, showing that the empirical resilience would always exceed the one predicted by our model (blue bar)
% design always satisfied  the user requirements 
for both $t_{SAT}$ and $E_{APP}$.
% as predicted by our models. 
For $26\%$ of the design (red bar) the empirical $E_{FC}$ proved to be smaller than the predicted one 
% failed to satisfy  the requirements on  
by a negligible margin ($4 \times 10^{-3}$), which is within the error affecting our  simulation-based empirical estimates. 
% Verification results in Fig.~\ref{fig:security_satisfy} shows the accuracy of our models on four ISCAS benchmark circuits. The blue bar (Satisfied) means the netlist achieved the user specified security level, while the red bar (Unsatisfied) means the netlist fails to achieved the user specified security level. 
% For both $t_{SAT}$ and $E_{APP}$, all of the generated netlists achieved the specified security level. 
% For $E_{FC}$, on average $74\%$ of the netlists are satisfied. However, the failed $26\%$ netlists have only $0.004$ as the average error between the specified $E_{FC}$ and the achieved $E_{FC}$ which is nearly negligible. Besides, the $E_{FC}$ calculation tool we use rely on large amount of simulations thus the result might be slightly biased by the selection of simulation input patterns. 
% The verification experiments show the high performance of our security models of the selected techniques. 

\begin{figure}[t]
	\centering
	% \vspace{-10pt}
	\includegraphics[width=0.8\columnwidth]{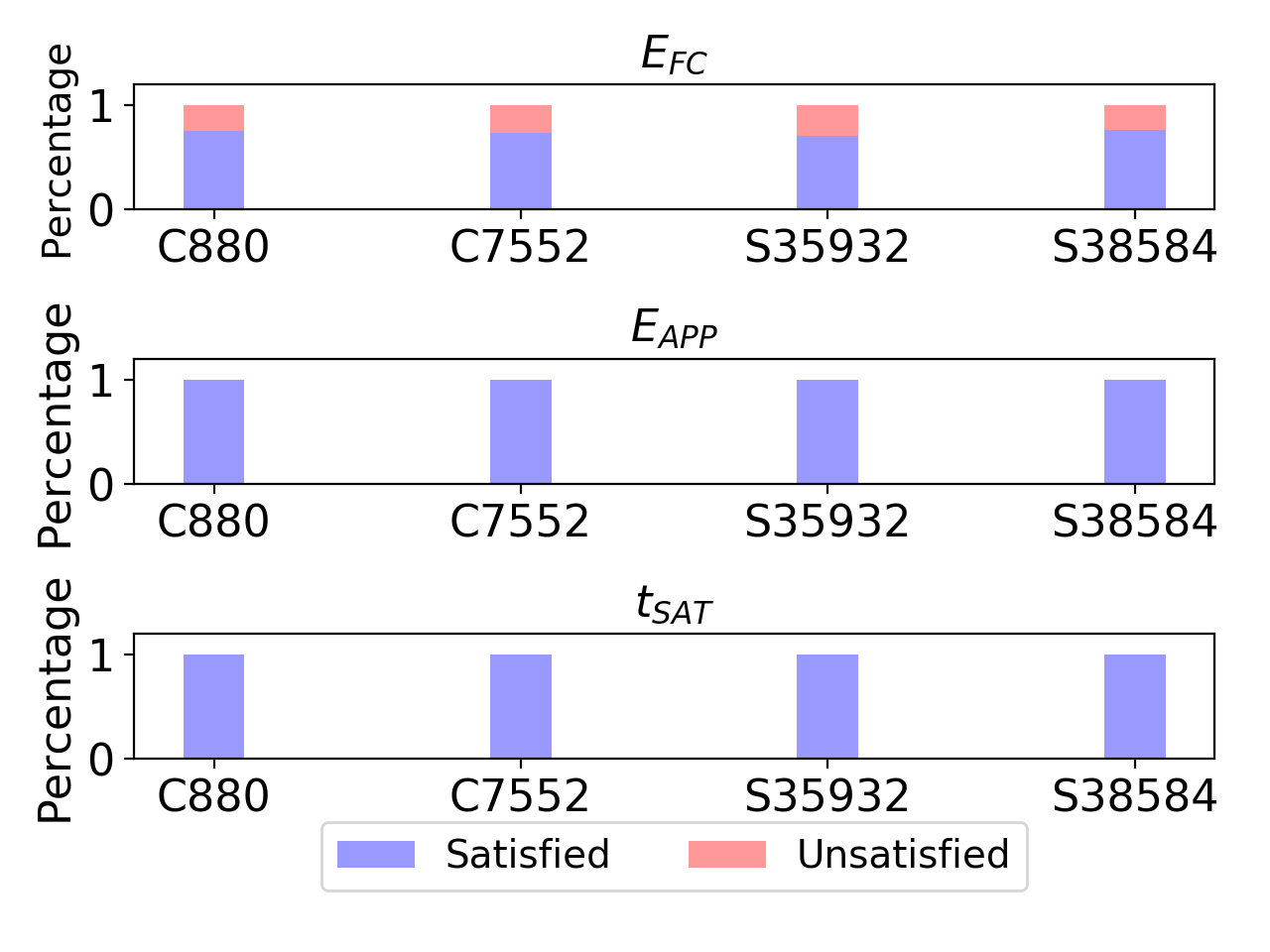}
    \vspace{-10pt}
	\caption{Verification pass rate on different security concerns. 
% 	\pierluigi{Update figures.}
	}
	\vspace{-15pt}
	\label{fig:security_satisfy}
\end{figure}

To compare our SAT resilience model for \texttt{SFLL} with the measured number of DIPs,  
% the probabilistic SAT attack model of \texttt{SFLL} with results from real SAT attacks, 
we simulate SAT attacks on the encrypted netlists of four ISCAS circuits using \texttt{SFLL}-HD. For each combination of key size $|K|$ and HD value $h$, we generate $10$ netlists, by randomly permuting the order of the gates, and compute the average number of DIPs over $10$ SAT attacks. 
% \pierluigi{I do not think that permuting the names makes a difference, the point is the ordering of the gates, because this can change the CNF. We need to be very careful on the wording here. Please control Yasin's suggestions and what Vivek does.} \yinghua{The experiment was done as Vivek told us. I did not express the procedure well in the text. }
As shown in Fig.~\ref{fig:sat_attack_sfll}, when $h$ is close to zero, the predictions of the probabilistic model~\cite{Yasin2017Provably-Secure} exhibit an exponential behavior that significantly differ from the simulated results, and the maximum error can be as high as $20,000\%$. 
% \pierluigi{Are we averaging over all benchmarks too, or the figures showing the errors are only for one benchmark? Also, does it make sense to have this big number in percentage? Is it probably better to express the orders of magnitude?} 
% \yinghua{Fig. 5 is not dependent on the selection of benchmarks, as long as the key size does not exceed the input size.}
Instead, the greedy algorithm predicts the simulated number of DIPs for all key sizes and $h$ values with relative errors that are less than or equal to $97\%$, two orders of magnitude smaller than previous approaches. For $h\!>\!0$, the average prediction error of the greedy algorithm becomes twice as small as the one of the probabilistic model. 

Fig.~\ref{fig:sat_attack_sfll_fix_h}a-d show the relation between the number of DIPs and the key size $|K|$ when $h=$ 0, 1, 2, and $\frac{|K|}{2}$, respectively. In Fig.~\ref{fig:sat_attack_sfll_fix_h}a and Fig.~\ref{fig:sat_attack_sfll_fix_h}b, when $h$ is close to zero, the probabilistic model offers an estimate of the number of DIPs which deviates from the simulated result, while the greedy algorithm always returns a closer, more conservative  prediction. 
In Fig.~\ref{fig:sat_attack_sfll_fix_h}c, the greedy algorithm shows better accuracy than the probabilistic model when $|K|\leq 7$. For the other key sizes, the probabilistic model outperforms the greedy algorithm. However, the prediction provided by the probabilistic model grows faster than in  simulation and tend to overestimate the number of DIPs when $|K| \geq 13$. 
In Fig.~\ref{fig:sat_attack_sfll_fix_h}d, when $h=\frac{|K|}{2}$, the prediction from the probabilistic model becomes more conservative. Conversely, the greedy algorithm estimates an average error twice as small as the probabilistic model. 
% As shown in Fig.~\ref{fig:sat_attack_sfll}, the average number of DIPs is accurately predicted by our greedy algorithms for all key sizes and $h$ values, while it diverges significantly from the estimates based on the probabilistic model in the literature~\cite{Yasin2017Provably-Secure}, especially when $h$ is close to zero. 
% \pierluigi{This is surprising as the probabilistic model should be accurate for $h=0$. Also, forget about $k/2$ unless you test it on a statistically significant sample. Only the greedy is trustworthy!}\yinghua{$h=0$ case does satisfy assumption No.2, but assumption No.1, about the SAT solver's heuristics, might have a big impact on the SAT attack, so I would say both surprising and not so surprising. Real SAT attacks tend to finish even earlier than CCS paper expected because one protected DIP is enough in $h=0$ case. }
% from the experiment, from the probabilistic model in~\cite{Yasin2017Provably-Secure}, and our approximation model based on the observation in Fig.~\ref{fig:min_dip}, i.e., $k/2$. It is obvious that the probabilistic model is far from the real SAT attacks when $h$ is close to zero. Our approximation, instead, shows little deviation under any key size and $h$ values.

% \pierluigi{I think I had phrased all this part much nicely on my first pass but had lost that text!}
Overall, the aforementioned results reveal the inherent difficulty of achieving high security levels against multiple threats using a single technique. However, this challenge may be addressed by combining multiple techniques. To test this hypothesis,
% The analysis above suggests that combining multiple techniques can help alleviate the design trade-offs and 
% achieve high security levels across multiple metrics. To test this hypothesis, 
% potential of our modeling in compound encryption schemes, 
we encrypted the ISCAS circuit C880 with both \texttt{SARLock} and \texttt{DTL}, by using a logic OR gate to combine their output 
(flip) signals. We then combined the output of the OR gate with the output of the original circuit via a XOR gate. 
Fig.~\ref{fig:trade_off}d shows that the compound strategy significantly alleviates the trade-offs posed by \texttt{SARLock} alone, making it possible to achieve both high functional corruptibility and SAT-attack resilience. 
% \YHrevise{
% Three points in the red area are labelled with their detailed encryption parameters. 
For example, the topmost configuration in Fig.~\ref{fig:trade_off}d achieves $t_{SAT}\geq 2^{14}$ and $E_{FC} \geq 0.48$, which cannot be obtained with \texttt{SARLock} or \texttt{DTL} alone. The compound scheme, where the \texttt{SARLock} block has key size 13 and the \texttt{DTL} block has key size 4, with two AND gates being replaced by one XOR gate in the first layer, is able to provide both high $t_{SAT}$ and $E_{FC}$. Simulation results are, again, in  agreement with our model predictions,  
% The same verification procedures have been implemented on all netlists generated in FIg~\ref{fig:trade_off}d. 
% The pass rates of $84.8\%$ on $E_{FC}$ and $100\%$ on $t_{SAT}$ demonstrate 
showing that our models can indeed be used to capture the performance of compound encryption schemes. 

\section{Conclusions}\label{sec:conclusion}

% We present a general functional model for logic encryption and a set of security metrics that apply to all known techniques. 
Simulation results show the effectiveness of the proposed models and metrics for fast and accurate evaluation of the design trade-offs as well as the exploration of compound logic encryption strategies, which may be required for protecting against different threats with small overhead. Future extension of this work include the  incorporation of overhead models as well as support for structural and learning-based attacks. We plan to also investigate the extension of our framework to sequential logic encryption methods. 
% ~\cite{sirone2019functional}; 
% (3) extending models and metrics to support . 
Finally, we plan to further develop an automated design and verification environment~\cite{vivek2019system} leveraging  our models and methods to perform design space exploration and inform system-level design decisions across multiple encryption schemes. 
% \pierluigi{Rephrase based on intro and abstract. We should state that future work should cover other forms of attack such as structural and learning-based attacks. Also overhead is missing. Also, would like to go to a design framework and system-level methodology, where these metrics are connected with the system level. Here we can connect with the Mirage framework of SecDev and cite it. Also, future work includes sequential encryption. Also, impact of circuit is to be considered.}
% % We presented a requirement-driven methodology for logic encryption, which enables design space exploration and trade-off evaluation across different logic encryption strategies. Future work includes extending the framework to support sequential logic encryption, 
% refinement of security and overhead models from post-synthesis data, models and architectures (e.g., LUT-based elements, one-way functions, universal circuits) to minimize the risk of information leakage via the encrypted circuit function and structure.

\section*{Acknowledgments}\label{sec:ack}
% \begin{acks}
This work was partially sponsored by the Air Force Research Laboratory (AFRL) and the Defense Advanced Research Projects Agency (DARPA) under agreement number FA8560-18-1-7817. 
% The U.S. Government is authorized to reproduce and distribute reprints for Governmental purposes notwithstanding any copyright notation thereon. The views and conclusions contained herein are those of the authors and should not be interpreted as necessarily representing the official policies or endorsements, either expressed or implied, of the AFRL, DARPA, or the  U.S. Government.
% \end{acks}
% \newpage

% \bibliographystyle{abbrv}
\bibliographystyle{ieeetr}
\bibliography{archEx,mirage_ref} 
\end{document}